\newcolumntype{Y}{>{\centering\arraybackslash}X}
\def\qed{\rule{2mm}{2mm}}
\mathchardef\dash="2D
\newtheorem{theorem}{Theorem}[section]
\newtheorem{lemma}{Lemma}[section]
\newtheorem{corollary}{Corollary}[section]
\theoremstyle{definition}
\newtheorem{remark}{Remark}[section]
\newtheorem{algorithm}{Algorithm}[section]
\begin{document}
\author{
Sergei Bazylik \\
Department of Economics\\
University of Chicago 
\and
Magne Mogstad \\
Department of Economics\\
University of Chicago \\
Statistics Norway \& NBER 
\and
Joseph P.\ Romano \\
Departments of Statistics and Economics\\
Stanford University 
\and
Azeem M.\ Shaikh\\
Department of Economics\\
University of Chicago 
\and
Daniel Wilhelm\\
Departments of Statistics and Economics\\
LMU Munich 
}

\bigskip

\title{\vspace{-1in}Finite- and Large-Sample Inference for Ranks using \\ Multinomial Data with an Application to Ranking Political Parties \thanks{The third author acknowledges support from the National Science Foundation (MMS-1949845). The fourth author acknowledges support from the National Science Foundation (SES-1530661). The fifth author acknowledges support from the ESRC Centre for Microdata Methods and Practice at IFS (RES-589-28-0001) and the European Research Council (Starting Grant No. 852332).
}}
\maketitle

\vspace{-0.3in}

\begin{spacing}{1}
\begin{abstract}
\noindent It is common to rank different categories by means of preferences that are revealed through  data on choices.  A prominent example is the ranking of political candidates or parties using the estimated share of support each one receives in surveys or polls about political attitudes.  Since these rankings are computed using estimates of the share of support rather than the true share of support, there may be considerable uncertainty concerning the true ranking of the political candidates or parties.   In this paper, we consider the problem of accounting for such uncertainty by constructing confidence sets for the rank of each category. We consider both the problem of constructing marginal confidence sets for the rank of a particular category as well as simultaneous confidence sets for the ranks of all categories. A distinguishing feature of our analysis is that we exploit the multinomial structure of the data to develop confidence sets that are valid in finite samples.  We additionally develop confidence sets using the bootstrap that are valid only approximately in large samples.  We use our methodology to rank political parties in Australia using data from the 2019 Australian Election Survey.  We find that our finite-sample confidence sets are informative across the entire ranking of political parties, even in Australian territories with few survey respondents and/or with parties that are chosen by only a small share of the survey respondents.  In contrast, the bootstrap-based confidence sets may sometimes be considerably less informative. These findings motivate us to compare these methods in an empirically-driven simulation study, in which we conclude that our finite-sample confidence sets often perform better than their large-sample, bootstrap-based counterparts, especially in settings that resemble our empirical application.
\end{abstract}
\end{spacing}

\noindent KEYWORDS: Confidence sets, Multinomial Data, Multiple Testing, Polls, Ranks, Surveys

\noindent JEL classification codes: C12, C14, D31, I20, J62

\thispagestyle{empty} 
\newpage

\section{Introduction} \label{sec:intro}

Preferences over different categories are often assessed by means of data on choices. It is natural to summarize this type of data by ranking the different categories according to the share of support each one receives in the data. A prominent example is provided by surveys or polls of support for political candidates. In this case, individuals choose one political candidate or party from among those available. The resulting rankings may shape public discussion, inform campaigns, and be used as inputs into consequential decisions before the actual election. For example, in U.S. Presidential Elections, the decision about which candidates to feature in nationally televised political debates may hinge on their performance in different polls leading up to the election. Another prominent example is choice-based conjoint analysis, in which respondents select which of several options they would purchase or otherwise choose if given the option.  Such analyses are regularly used in marketing analysis both to assess which product features are most valued and thereby inform decisions about which products to introduce. 
In both these examples, however, it is important to acknowledge the uncertainty surrounding these rankings.

Such data on choices, including polls of political attitudes, commonly feature limited sample sizes and/or categories whose true share of support is small.  As explained further below, these features pose challenges to inference methods justified using large-sample arguments. In contrast, this paper considers the problem of constructing confidence sets for the rank of each category that are valid in finite samples, even when some categories are chosen with probability close to zero. We consider two types of confidence sets: marginal confidence sets for the rank of a particular category, by which we mean random sets that contain the rank of a particular category with probability no less than some pre-specified level, as well as simultaneous confidence sets for the ranks of all categories, by which we mean random sets that contain the ranks of all categories with probability  no less than some pre-specified level. The former confidence sets provide a way of accounting for uncertainty when answering questions pertaining to the rank of a particular category, whereas the latter confidence sets provide a way of accounting for uncertainty when answering questions pertaining to the ranks of all categories.  Our constructions are based off of testing a family of one-sided null hypotheses concerning differences in pairs of success probabilities in a way that controls the familywise error rate in finite samples. In order to do so, we exploit the multinomial structure of the data, which enables the use of a simple conditioning argument.

As a second contribution, we develop bootstrap methods for the construction of these confidence sets.  
Their validity is justified using large-sample arguments as in \cite{Mogstad:2024aa}. However, unlike in \cite{Mogstad:2024aa}'s applications, the estimators of the success probabilities of different categories are necessarily dependent and the bootstrap procedure proposed in this paper explicitly accounts for this dependence. As described in more detail below, the results in \cite{Brown:2001p5787} suggest that such bootstrap methods may perform poorly when sample sizes are small and/or some categories are chosen with small probabilities. In particular, approaches that explicitly or implicitly (such as the bootstrap) rely on asymptotic normality of the estimators of the success probabilities perform poorly when the true success probability is small. In such a case, it is well known that the Poisson distribution is in fact a better approximation than the normal. In our simulations, we find not only that the bootstrap-based confidence sets (with or without studentization) may have coverage probability considerably below the desired level, but also that they may be excessively wide. In contrast, the finite-sample confidence sets have coverage probability no less than the desired nominal level and may even be considerably shorter.

We apply our inference procedures to re-examine the ranking of political parties in Australia using data from the 2019 Australian Election Survey.  We find that the finite-sample (marginal and simultaneous) confidence sets are remarkably informative across the entire ranking of political parties, even in Australian territories with few survey respondents and/or with parties that are chosen by only a small share of the survey respondents.  To illustrate this point further, consider one particular Australian territory, Greater Melbourne.  We find that the finite-sample confidence sets are either of similar length to or substantially shorter than their bootstrap-based counterparts (with or without studentization).  For instance, at conventional significance levels, the finite-sample marginal confidence set for the rank of the Green Party contains only rank 4.  In contrast, the bootstrap-based marginal confidence sets (with or without studentization) contain the ranks 3 to 7, thus exhibiting substantially more uncertainty about the true rank of the Green Party.  The studentized procedure leads to especially wide confidence sets for the ranks of parties that are chosen only by a small share of respondents.  We find similar patterns in the eight most populous territories, while confidence sets in the remaining seven least populous territories are uninformative due to very small sample sizes. Unlike in Greater Melbourne, however, in some other territories bootstrap-based confidence sets (with or without studentization) may be slightly smaller than their finite-sample counterparts.

These findings motivate us to compare the different confidence sets in a simulation study modeled after our empirical application.  The findings of this exercise can be summarized as follows. First, finite-sample marginal confidence sets have coverage probabilities no less than the desired nominal level in all simulation designs, including those with very small sample sizes and/or with parties that are chosen by only a small share of the respondents.  Second, bootstrap-based confidence sets without studentiziation also have coverage probabilities no less than the nominal level, except when sample sizes are very small. In contrast, bootstrap-based confidence sets with studentization may have coverage probabilities less than the nominal level when sample sizes are small and/or the number of parties to be ranked is not too small. Third, the finite-sample confidence sets may produce considerably shorter confidence sets for the ranks compared to the bootstrap-based ones, especially when there are parties that are chosen by only a small share of respondents. However, there are also situations in which the latter methods produce shorter confidence sets than the former, so neither approach always dominates the other in terms of length of their confidence sets.

Our paper is most closely related to the aforementioned paper by \cite{Mogstad:2024aa}.  We emphasize that the primary contribution of our analysis relative to theirs is to show how one may exploit additional structure given by the multinomial data to construct confidence sets that enjoy finite-sample validity. Importantly, the finite sample guarantee allows for data-generating processes with success probabilities that are arbitrarily close to zero, whereas the asymptotic arguments justifying the bootstrap require these probabilities to be bounded away from zero. Second, we propose a bootstrap method that accounts for the dependence in the estimators of the multinomial success probabilities. Our paper is also related to a recent paper by \cite{Klein:2020oi}, who consider the problem of constructing confidence sets analogous to those in \cite{Mogstad:2024aa}.  We show how a modification of their procedure can also be used to construct confidence sets that are valid in finite samples in the presence of multinomial data.  In our simulations, we find that the resulting confidence sets are often of comparable length to our finite-sample confidence sets, but sometimes meaningfully larger.  We refer the reader to \cite{Mogstad:2024aa} for additional comparisons.  Other related work includes \cite{Goldstein:1996re}, who propose a different bootstrap-based confidence set to account for uncertainty in reported ranks.  As explained by \cite{Hall:2009oi}, \cite{Xie:2009oi} and most recently by \cite{Mogstad:2024aa}, however, this method performs poorly in the presence of categories that are chosen with similar frequencies (i.e., in the context of our simulations, when some parties are nearly tied).  We confirm this finding in our simulations.  

Our paper also draws motivation from earlier work by \cite{Brown:2001p5787}, who demonstrate that conventional confidence intervals for the probability of success using binomial data may behave poorly in the sense of exhibiting undercoverage, especially when the success probability is close to zero or one, and may also behave erratically in the sense that coverage probabilities may be volatile and non-monotonic in the sample size. In our simulations, we find similar patterns concerning the coverage probabilities for the differences in pairs of success probabilities using multinomial data.  For this reason, we view insistence upon finite-sample validity for our confidence sets to be especially compelling in this setting.  On the other hand, we find that this poor behavior of confidence sets for the differences in the success probabilities need not translate into similar behavior for the implied confidence sets for the ranks.

We note some key differences between the problems considered in this paper and those of two recent papers in econometrics, \cite{Andrews:2018qf} and \cite{gu2020invidious}.  In the context of the multinomial setting studied here, \cite{Andrews:2018qf} develop methods for inference on the true success probability for the randomly selected category whose estimated rank is highest.  In contrast, as the discussion above makes clear, we develop methods for inference on the true ranks themselves.  \cite{gu2020invidious} develop decision rules for selecting the most popular categories (i.e., those with the highest success probabilities), which is more closely related to a literature on subset selection (see \cite{Gupta:1979hi} for a review).  We show, however, how our simultaneous confidence sets may be used to create a complimentary object that we refer to as the confidence set for the $\tau$-best.  For given value of $\tau$, such a confidence set is a random set that contains the identities of (all of) the categories whose rank is less than or equal to $\tau$ with probability approximately no less than some pre-specified level. 

Finally, there is also a large literature about the analysis of datasets that contain observations of elicited rankings (e.g., \cite{Marden:1995aa}), but this differs from our setting in which we assume only an individual's top choice is observed rather than their ranking of all available options.

The remainder of the paper is organized as follows.  In Section \ref{sec:setup}, we introduce our general setup, including a formal description of the different types of confidence sets we consider and the general testing problem involved in their constructions.  Suitable tests that lead to confidence sets that are valid in finite samples are then described in \ref{sec:margsimul}.  The construction of confidence sets for the $\tau$-best that are valid in finite samples is briefly summarized in \ref{sec:taubest}.  Section \ref{sec: bootstrap} described bootstrap-based versions of these same confidence sets.  In Section \ref{sec: Melbourne}, we apply our inference procedures to re-examine the ranking of political parties in Australia.  Finally, in Section \ref{sec: sim}, we examine the finite-sample behavior of our inference procedures via a simulation study modeled after our empirical application. 

\section{Main Results} 
\label{sec:main}

\subsection{Setup and Notation}
\label{sec:setup}

Let $j \in J \equiv \{1, \ldots, p\}$ index categories of interest, e.g., parties in an election. 
There are $n$ independent observations, and each observation falls in category $j$ with probability $\theta_j$.
Let
$X_j$ denote the observed count for category $j$ from  the $n$ observations, e.g. the number of votes party $j$ receives from $n$ voters.   Hence, $X\equiv (X_1,\ldots,X_p)'$ is distributed according to the multinomial distribution with parameters $n$ and $\theta\equiv (\theta_1,\ldots,\theta_p)'$. 

The rank of category $j$ is defined as
$$r_j \equiv 1 + \sum_{k \in J} \mathds{1}\{\theta_k > \theta_j \},$$
where $\mathds{1}\{A\}$ is equal to one if the event $A$ holds and equal to zero otherwise. Let $r\equiv (r_1,\ldots,r_p)'$. Before proceeding, it is useful to provide a simple example to illustrate the way in which ties are handled with this definition of ranks: if $\theta = (0.4,0.1,0.1,0.2,0.2)'$, then $r = (1,4,4,2,2)'$.

The primary goal is to construct confidence sets for the rank of a particular category or for the ranks of multiple categories simultaneously. Let $J_0\subseteq J$ denote the categories of interest. For a given value of $\alpha \in (0,1)$, we use data $X$ to construct (random) sets $$R_{n} \equiv \prod_{j\in J_0} R_{n,j},$$ where the (random) sets $R_{n,j}$, $j\in J_0$, are such that
\begin{equation} \label{eq: coverage}
 P\left\{r_{j} \in R_{n,j}\;\forall j\in J_0\right\} \geq 1 - \alpha.
\end{equation}
If $J_0$ is a singleton, then sets $R_{n}$ satisfying \eqref{eq: coverage} are referred to as {\it marginal confidence sets for the rank of a single category}. If $J_0=J$, then sets $R_{n}$ satisfying \eqref{eq: coverage} are referred to as {\it simultaneous confidence sets for the ranks of all categories}. The remainder of the paper, however, allows $J_0$ to be any subset of $J$. In our constructions, $R_{n,j}$ are subsets of $J$ for each $j\in J_0$, allowing for the possibility that the lower endpoint is 1 or the upper endpoint is $p$ to permit both one-sided and two-sided inference.

In addition, we consider the goal of constructing confidence sets for the identities of all categories whose rank is less than or equal to a pre-specified value $\tau \in J$, i.e., for a given value of $\alpha \in (0,1)$, we construct (random) sets $R^{\tau-\rm{best}}_n$ that are subsets of $J$ and satisfy 
\begin{equation} \label{eq:taubestcoverage}
	P\left\{R_0^{\tau-\rm{best}} \subseteq R^{\tau-\rm{best}}_n\right\} \geq 1 - \alpha~,
\end{equation}
where $$R_0^{\tau-\rm{best}} \equiv \{j \in J : r_j \leq \tau \}~.$$  Sets satisfying \eqref{eq:taubestcoverage} are referred to as {\it confidence sets for the $\tau$-best categories}.

As in \cite{Mogstad:2024aa}, the construction of confidence sets for ranks can be based on tests of the hypotheses
$$H_{j,k}\colon \theta_j\leq \theta_k $$
for pairs of indices $(j,k)\in J^2$. Which pairs are relevant depends on whether the desired confidence sets for the ranks indicated by $J_0$ are lower, upper or two-sided confidence bounds:
\begin{align*}
	J^{\rm lower} &\equiv \{(j,k)\in J\times J_0\colon j\neq k \}\\
	J^{\rm upper} &\equiv \{(j,k)\in J_0\times J\colon j\neq k \}\\
	J^{\rm two-sided} &\equiv J^{\rm lower} \cup J^{\rm upper}
\end{align*}
Suppose a family of tests of the hypotheses $H_{j,k}$ is given.  Then, for each $j\in J_0$, let
\begin{equation}\label{eq: rej minus}
	\text{Rej}_{j}^{-} \equiv \{k\in J\setminus \{j\}\colon \text{ reject } H_{k,{j}} \text{ and claim } \theta_{j}<\theta_k\}
\end{equation}
indicate the set of hypotheses that are rejected in favor of $\theta_{j}<\theta_k$ and 
\begin{equation}\label{eq: rej plus}
	\text{Rej}_{j}^{+} \equiv \{k\in J\setminus\{j\} \colon \text{ reject } H_{j,k} \text{ and claim } \theta_{j}>\theta_k\}
\end{equation}
the set of hypotheses that are rejected in favor of $\theta_{j}>\theta_k$. Consider the goal of constructing a two-sided marginal confidence set for the rank of a category $j_0$, i.e., $J_0=\{j_0\}$. Then, $\text{Rej}_{j_0}^{-}$ contains all categories $k\neq j_0$ whose parameter $\theta_k$ is claimed to be strictly larger than $\theta_{j_0}$. If these claims were correct, then the lower bound on the rank of category $j_0$ must be equal to the number of such categories $k$, denoted by $|\text{Rej}_{j_0}^{-}|$, plus one. Similarly, $\text{Rej}_{j_0}^{+}$ contains all categories $k\neq j_0$ whose parameter $\theta_k$ is claimed to be strictly smaller than $\theta_{j_0}$. Again, if these claims were correct, then the upper bound on the rank of category $j_0$ must be the total number of categories, $p$, minus the number of categories with smaller probability, denoted by $|\text{Rej}_{j_0}^{+}|$. Therefore, if all claims made in $\text{Rej}_{j_0}^{-}$ and $\text{Rej}_{j_0}^{+}$ are correct, the set
\begin{equation*}
	R_{n,j_0} \equiv \left\{|\text{Rej}_{j_0}^-| + 1, \ldots, p-|\text{Rej}_{j_0}^+|\right\}.
\end{equation*}
contains the rank of category $j_0$, $r_{j_0}$. More generally, for an arbitrary set of indices $J_0\subseteq J$, if all claims made in $\text{Rej}_{j}^{-}$ and $\text{Rej}_{j}^{+}$, $j\in J_0$, are correct, then the set 
\begin{equation}\label{eq: def joint CS}
	R_{n} \equiv \prod_{j\in J_0} R_{n,j}\qquad\text{ with }\qquad R_{n,j}\equiv\left\{|\text{Rej}_j^-| + 1, \ldots, p-|\text{Rej}_j^+|\right\}
\end{equation}
contains the ranks of all categories in $J_0$, $(r_j\colon j\in J_0)$. Of course, it cannot be guaranteed that tests of $H_{j,k}$ never falsely reject, but the probability of such mistakes can be controlled. More specifically, for the set $R_{n}$ to satisfy the coverage statement in \eqref{eq: coverage}, the number of false claims must be controlled in the sense that the familywise error rate for testing $H_{j,k}$ for the relevant pairs of indices $I\subset J^2$ is no larger than $\alpha$, i.e.,
\begin{equation}\label{eq: FWER control}
	 FWER_I\equiv P\left\{ \text{reject at least one true hypothesis } H_{j,k},\; (j,k)\in I \right\}\leq \alpha.
\end{equation}
For two-sided confidence sets, the relevant set of indices $I$ is $J^{\rm two-sided}$ and, for one-sided confidence sets, it is either $J^{\rm lower}$ or $J^{\rm upper}$. The following theorem is a slight generalization (allowing for a general set $J_0$ of indices) of Theorem~3.4 in \cite{Mogstad:2024aa} and summarizes the above discussion.

\begin{theorem}\label{thm: coverage}
	For $J_0\subseteq J$, let $I$ be equal to $J^{\rm lower}$, $J^{\rm upper}$, or $J^{\rm two-sided}$. Let $R_{n}$ be defined by \eqref{eq: rej minus}, \eqref{eq: rej plus}, and \eqref{eq: def joint CS}, where the family of hypotheses $H_{j,k}$, $(j,k)\in I$, is tested using a procedure that satisfies \eqref{eq: FWER control} for some $\alpha\in(0,1)$. Then, $R_{n}$ satisfies \eqref{eq: coverage}.
\end{theorem}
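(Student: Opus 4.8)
The plan is to bound the non-coverage event by the event that some true hypothesis in the family $\{H_{j,k}:(j,k)\in I\}$ is falsely rejected, and then invoke \eqref{eq: FWER control}. Concretely, I would introduce the event $E$ that no true hypothesis $H_{j,k}$, $(j,k)\in I$, is rejected. By \eqref{eq: FWER control}, $P(E)\geq 1-\alpha$, so it suffices to show $E\subseteq\{r_j\in R_{n,j}\ \text{for all}\ j\in J_0\}$; equivalently, that on $E$ both endpoint inequalities $|\text{Rej}_j^-|+1\leq r_j$ and $r_j\leq p-|\text{Rej}_j^+|$ hold for every $j\in J_0$.

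Next, I would fix $j\in J_0$ and argue the two inequalities on $E$. For the lower endpoint, every $k\in\text{Rej}_j^-$ arises from rejecting $H_{k,j}\colon\theta_k\leq\theta_j$; since $(k,j)\in J^{\rm lower}\subseteq I$ (using $j\in J_0$ and $k\neq j$), being on $E$ forces $H_{k,j}$ to be false, i.e.\ $\theta_k>\theta_j$. Hence $\text{Rej}_j^-\subseteq\{k\in J:\theta_k>\theta_j\}$, and comparing cardinalities with the definition $r_j=1+\sum_{k\in J}\mathds{1}\{\theta_k>\theta_j\}$ gives $|\text{Rej}_j^-|\leq r_j-1$. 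Symmetrically, every $k\in\text{Rej}_j^+$ arises from rejecting $H_{j,k}\colon\theta_j\leq\theta_k$ with $(j,k)\in J^{\rm upper}\subseteq I$, so on $E$ we obtain $\theta_k<\theta_j$ and thus $\text{Rej}_j^+\subseteq\{k\in J\setminus\{j\}:\theta_k\leq\theta_j\}$. Since exactly $r_j-1$ indices in $J$ satisfy $\theta_k>\theta_j$, exactly $p-r_j$ indices in $J\setminus\{j\}$ satisfy $\theta_k\leq\theta_j$, so $|\text{Rej}_j^+|\leq p-r_j$. Together these give $r_j\in\{|\text{Rej}_j^-|+1,\dots,p-|\text{Rej}_j^+|\}=R_{n,j}$ on $E$.

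Finally, I would note that when $I$ equals $J^{\rm lower}$ or $J^{\rm upper}$ only one of $\text{Rej}_j^-$, $\text{Rej}_j^+$ is fed by tests in the family (the other being empty, so its endpoint defaults to $1$ or $p$), and the corresponding half of the argument above still applies while the omitted inclusion $r_j\geq1$ or $r_j\leq p$ is trivial. Letting $j$ range over $J_0$ then yields $E\subseteq\{r_j\in R_{n,j}\ \text{for all}\ j\in J_0\}$ and hence \eqref{eq: coverage}. I do not anticipate a serious obstacle; the only points needing care are the bookkeeping with ties in the definition of $r_j$ — in particular that $|\{k\in J\setminus\{j\}:\theta_k\leq\theta_j\}|=p-r_j$ regardless of how many categories share the value $\theta_j$ — and confirming that the pair indices implicitly referenced by $\text{Rej}_j^-$ and $\text{Rej}_j^+$ indeed belong to the set $I$ over which the familywise error rate is controlled.
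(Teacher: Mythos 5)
Your proposal is correct and follows essentially the same route as the paper's proof: both bound the non-coverage event by the event that some true hypothesis in the family is falsely rejected, and on the complement of that event verify the two endpoint inequalities $|\text{Rej}_j^-|+1\leq r_j$ and $r_j\leq p-|\text{Rej}_j^+|$ via the cardinality comparison with $r_j=1+\sum_{k\in J}\mathds{1}\{\theta_k>\theta_j\}$. Your write-up is in fact slightly more explicit than the paper's on the tie-robust counting step $|\{k\in J\setminus\{j\}:\theta_k\leq\theta_j\}|=p-r_j$ and on checking that the relevant pairs lie in $I$, but there is no substantive difference.
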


Instead of controlling the coverage probability in finite samples as in \eqref{eq: coverage}, the confidence sets proposed in \cite{Mogstad:2024aa} only asymptotically control the coverage probability. Their constructions assume the availability of bootstrap confidence sets that simultaneously cover the differences $\theta_j-\theta_k$ for all relevant pairs of indices $(j,k)$. While their paper does not explicitly show how these can be constructed when $X_1,\ldots,X_p$ are not independent (which by construction is the case with multinomial data), it is not difficult to propose an appropriate bootstrap procedure (see Appendix~\ref{app: bootstrap}).

The general approach in \cite{Mogstad:2024aa} does not require $X$ to follow a multinomial distribution. The purpose of the remainder of this paper is to show that with this additional distributional assumption it is possible to construct confidence sets for ranks that control the coverage probability not only asymptotically, but in finite samples.  

\begin{remark}[Definition of Rank]\label{rem: set of ranks}
	To simplify the exposition in this remark, suppose we are interested in a single category, $J_0=\{j_0\}$. In the presence of ties, the rank of a category can be defined in different ways. For any $j\in J$, let $\underline r_j \equiv 1 + \sum_{k \in J} \mathds{1}\{\theta_k > \theta_j \}$ and $\bar r_j\equiv p - \sum_{k \in J} \mathds{1} \{\theta_k < \theta_j \}$ be the smallest (i.e., best) and largest (i.e., worst) possible rank of category $j$. If category $j_0$ is not tied with any other category, then $\underline{r}_{j_0}=\bar{r}_{j_0}$ and the rank is unique. On the other hand, when category $j_0$ is tied with at least one other category, then $\underline{r}_{j_0} < \bar{r}_{j_0}$ and different definitions of the rank may select different values from the interval $R_{j_0}\equiv [\underline{r}_{j_0},\bar{r}_{j_0}]$. An inspection of the proof of Theorem~\ref{thm: coverage} reveals that the confidence set $R_{n}$ not only covers our definition of the rank, $r_{j_0}$, in the sense of \eqref{eq: coverage}, but also any other ``reasonable'' definition of the rank in the sense that
	\begin{equation*}
		 P\left\{R_{j_0} \subseteq R_{n}^{\rm cont}\right\} \geq 1 - \alpha,
	\end{equation*}
	where $R_{n}^{\rm cont} \equiv [\min (R_{n}),\max(R_{n})]$ is the interval from the smallest to the largest value in the confidence set $R_{n}$.
\end{remark}

\subsection{Marginal and Simultaneous Confidence Sets for Ranks} \label{sec:margsimul}

In light of the previous discussion, for the construction of a confidence set satisfying \eqref{eq: coverage}, it remains to propose a procedure for testing the family of hypotheses $H_{j,k}$, $(j,k)\in I$, that controls $FWER_I$. In this section, we propose a test of the individual hypothesis $H_{j,k}$ with nominal level $\beta_{j,k}$ and then choose the constants $(\beta_{j,k}\colon (j,k)\in I)$ in a way that controls $FWER_I$ in the sense of \eqref{eq: FWER control}. 

Let $S_{j,k}\equiv X_j + X_k$. One can show that the conditional distribution of $X_j$ given $S_{j,k}=s$ is binomial based on $s$ trials and success probability $\theta_j/(\theta_j+\theta_k)$; a simple proof appears in the proof of Theorem \ref{thm: size control}.
Notice that $H_{j,k}$ is equivalent to $\theta_{j,k} \le 1/2$, where $\theta_{j,k} = \theta_j / ( \theta_j + \theta_k )$.
Conditioning on $S_{j,k}$  eliminates nuisance parameters and reduces the testing problem to a one-parameter problem of testing a binomial probability.  An exact level $\beta_{j,k}$ test may be therefore be easily constructed.
 In particular, the (possibly randomized) test of $H_{j,k}$ defined by the critical function
\begin{equation}\label{eq: def of test}
	\phi(x,s) = \left\{ \begin{array}{cc} 1,& \text{if } x > C(s)\\ \gamma(s), & \text{if } x=C(s)\\ 0,& \text{if } x < C(s)\end{array} \right.
\end{equation}
with constants $\gamma(s)$, $C(s)$ determined by
\begin{equation}\label{eq: def of constants}
	\sum_{i = C(s) + 1}^{s} {s \choose i} (1/2)^{s} + \gamma (s) {s \choose {C(s)}} (1/2)^{s} = \beta_{j,k}\qquad \forall s,
\end{equation}
is an exact level $\beta_{j,k}$ test of $H_{j,k}$.  
The test has rejection probability equal to $\beta_{j,k}$ when $\theta_{j,k} = 1/2$.
Moreover, since the binomial family of distributions has monotone likelihood ratio, the test has rejection probability strictly less than 
$\beta_{j,k}$ whenever $\theta_{j,k} < 1/2$.
In fact, Theorem~\ref{thm: size control} below shows that, for testing $H_{j,k}$, the test $\phi  (X_j, S_{j,k} )$ is
uniformly most powerful  level $\beta_{j,k}$  among all level $\beta_{j,k}$  unbiased tests based on $(X_1 , \ldots , X_p )$; i.e., it is UMPU.
If $\gamma(s)>0$ and one wishes to avoid randomization of the test, then one may simply reject $H_{j,k}$ iff $X_j > C( S_{j,k} )+1$. The $p$-value for this slightly conservative approach of testing $H_{j,k}$ when $S_{j,k}=s$ can be written as
\begin{equation}\label{eq: p-val}
	\hat{p}_{j,k} \equiv  \frac{1}{2^{s}}\sum_{i=X_j}^{s} {s \choose i}.
\end{equation}
The following theorem summarizes the above discussion.

\begin{theorem}\label{thm: size control}
	For any $(j,k)\in J^2$, $j\neq k$, and $\beta_{j,k}\in(0,1)$, the test $\phi ( X_j, S_{j,k} )$ defined by \eqref{eq: def of test} and \eqref{eq: def of constants} is a UMPU level $\beta_{j,k}$ test of $H_{j,k}$.
\end{theorem}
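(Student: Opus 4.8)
The plan is to recognize this as an instance of the classical construction of uniformly most powerful unbiased tests for a one‑sided hypothesis about a single natural parameter of a multiparameter exponential family, reduced to a one‑parameter binomial problem by conditioning on a sufficient statistic for the nuisance parameters. First I would record the conditional distribution claimed in the discussion preceding the theorem: writing out the multinomial probability mass function and summing over all configurations of the counts $(X_i:i\notin\{j,k\})$ with $X_j=x$ and $X_k=s-x$ held fixed gives, for $x\in\{0,\dots,s\}$,
\[
P\{X_j=x\mid S_{j,k}=s\}=\binom{s}{x}\,\theta_{j,k}^{\,x}\,(1-\theta_{j,k})^{\,s-x},\qquad \theta_{j,k}\equiv\frac{\theta_j}{\theta_j+\theta_k},
\]
so that the conditional law of $X_j$ given $S_{j,k}=s$ is $\mathrm{Binomial}(s,\theta_{j,k})$, and $H_{j,k}\colon\theta_j\le\theta_k$ is exactly the hypothesis $\theta_{j,k}\le 1/2$.

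Next I would put the model in exponential‑family form. Using $\theta_k$ as the baseline category (substituting $X_k=n-\sum_{i\ne k}X_i$ and $\theta_k=1-\sum_{i\ne k}\theta_i$), the multinomial mass function factors as
\[
P\{X=x\}=\binom{n}{x_1,\dots,x_p}\,\theta_k^{\,n}\,\exp\!\Big(x_j\log\tfrac{\theta_j}{\theta_k}+\sum_{i\notin\{j,k\}}x_i\log\tfrac{\theta_i}{\theta_k}\Big),
\]
which is a full‑rank exponential family whose natural parameter $\eta\equiv\log(\theta_j/\theta_k)$ carries the statistic $X_j$, whose remaining (nuisance) natural parameters carry the statistics $T\equiv(X_i:i\notin\{j,k\})$, and in which $H_{j,k}$ is the one‑sided hypothesis $\eta\le 0$ (when $p=2$ the nuisance part is vacuous, $T$ is empty, and $S_{j,k}\equiv n$, so the claim reduces directly to the UMP one‑sided binomial test). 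By the standard theory of UMP unbiased testing for such hypotheses in multiparameter exponential families (as in Lehmann and Romano), the UMPU level‑$\beta_{j,k}$ test rejects for large values of $X_j$ conditionally on $T$, with conditional rejection probability at $\eta=0$ equal to $\beta_{j,k}$; uniqueness of this conditional test is ensured by completeness of the conditional binomial family (Neyman structure).

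The crux is then to observe that this conditional test depends on $T$ only through $S_{j,k}$: conditioning on $T$ fixes $X_j+X_k=n-\sum_{i\notin\{j,k\}}X_i=:S_{j,k}=s$, and by the first step the conditional law of $X_j$ given $T$ is $\mathrm{Binomial}(s,\theta_{j,k})$, which depends on $T$ only through $s$ and, at $\eta=0$, equals $\mathrm{Binomial}(s,1/2)$. Hence the critical value $C(s)$ and randomization probability $\gamma(s)$ are pinned down solely by $s$ through \eqref{eq: def of constants}, and the abstract UMPU test coincides with $\phi(X_j,S_{j,k})$ in \eqref{eq: def of test}. Finally, since the binomial family has monotone likelihood ratio in $X_j$, the conditional rejection probability of $\phi$ is nondecreasing in $\theta_{j,k}$, so it is at most $\beta_{j,k}$ whenever $\theta_{j,k}\le 1/2$, with equality at $\theta_{j,k}=1/2$; averaging over $S_{j,k}$ shows the unconditional size equals $\beta_{j,k}$ and is attained on the boundary $\theta_j=\theta_k$, so the test is an exact level‑$\beta_{j,k}$ test. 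I do not expect a serious obstacle here: the only points requiring care are the exponential‑family bookkeeping (choosing the baseline so that $X_j$ is conjugate to $\eta=\log(\theta_j/\theta_k)$) and verifying that the conditioning collapses to $S_{j,k}$, which is precisely what turns the abstract UMPU test into the concrete construction in \eqref{eq: def of test}–\eqref{eq: def of constants}.
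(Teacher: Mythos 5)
Your proposal is correct, and it deliberately takes the route that the paper's proof explicitly mentions and then avoids: the canonical multiparameter exponential family argument of Section~4.4 of \cite{Lehmann:2022aa}. The paper instead derives the conditional binomial law of $X_j$ given $S_{j,k}$ by a direct computation, obtains the conditional UMP test from monotone likelihood ratio (Corollary~3.4.1 there), and then establishes the \emph{unconditional} UMPU property by a bare-hands Neyman-structure argument: on the boundary $\omega_{j,k}=\{\theta:\theta_j=\theta_k\}$ the statistic $T=(S_{j,k},\,X_i, i\neq j, i\neq k)$ is complete and sufficient, so every similar test is conditionally level $\beta_{j,k}$ given $T$, and conditioning on $T$ collapses to conditioning on $S_{j,k}$. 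Your version instead reparametrizes with category $k$ as baseline so that $X_j$ is conjugate to the natural parameter $\eta=\log(\theta_j/\theta_k)$ and invokes the packaged UMPU theorem for one-sided hypotheses on a single natural parameter; your choice of baseline neatly sidesteps the rank-$(p-1)$ bookkeeping that the authors cite as their reason for avoiding this route, and your observation that conditioning on $T=(X_i:i\notin\{j,k\})$ fixes $S_{j,k}$ and that the conditional law depends on $T$ only through $S_{j,k}$ is exactly the step that turns the abstract test into $\phi(X_j,S_{j,k})$ of \eqref{eq: def of test}--\eqref{eq: def of constants}. The trade-off is that the paper's argument is self-contained modulo completeness of the multinomial family on $\omega_{j,k}$, whereas yours leans on the general exponential-family theorem but is shorter and makes the structure of the nuisance parameters transparent. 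One small imprecision: Neyman structure follows from (bounded) completeness of the family of distributions of the sufficient statistic $T$ \emph{on the boundary}, not from completeness of the conditional binomial family; the latter is what gives essential uniqueness of the conditional test. Since the theorem you invoke subsumes both points, this does not affect the validity of the argument.
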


This theorem shows that $\phi$ defines a level $\beta_{j,k}$ test of $H_{j,k}$. To satisfy \eqref{eq: FWER control} one could combine the individual tests, i.e., choose the $(\beta_{j,k}\colon (j,k)\in I)$, by a Bonferroni correction or by the \cite{Holm:1979xy} procedure, for example. Theorem~\ref{thm: coverage} then implies that the confidence set $R_n$, based on such a procedure, has coverage probability no less than $1-\alpha$.

The steps necessary for construction of the proposed confidence sets for the ranks, using the non-randomized test with $p$-value in \eqref{eq: p-val}, are summarized as follows:

\begin{algorithm}\label{alg:marg sim CS} \hspace{1cm}
\vspace{-.2cm}
	\begin{enumerate}
		\item Choose the set $J_0\subseteq J$ of categories of interest.
		\item Set $I$ equal to one of $J^{\rm lower}$, $J^{\rm upper}$, or $J^{\rm two-sided}$, depending on whether lower, upper, or two-sided confidence bounds on the ranks of categories in $J_0$ are desired.
		\item Test the family of hypotheses $H_{j,k}$, $(j,k)\in I$, so that $FWER_I$ is controlled. For instance:
			\begin{itemize}
				\item {\bf Bonferroni:} $H_{j,k}$ is rejected iff
					$$\hat{p}_{j,k} \leq \frac{\alpha}{|I|}. $$

				\item {\bf Holm:} order the p-values $\hat{p}_{j,k}$, $(j,k)\in I$, from the smallest to the largest, $\hat{p}_{(1)} \leq \cdots \leq \hat{p}_{(|I|)}$, and denote the corresponding hypotheses by $H_{(1)},\ldots,H_{(|I|)}$. Then, $H_{(l)}$ is rejected iff 
				$$\hat{p}_{(l')} \leq \frac{\alpha}{|I|+1-l'}\quad \forall l'\leq l. $$
			\end{itemize}
		\item For each $j\in J_0$, collect the rejected hypotheses as in \eqref{eq: rej minus} and \eqref{eq: rej plus}.

		\item Construct $R_n$, the confidence set for the ranks $(r_j\colon j\in J_0)$, as in \eqref{eq: def joint CS}.
	\end{enumerate}
\end{algorithm}

Since the Holm procedure rejects at least as many hypotheses as Bonferroni (with probability one) and thus leads to confidence sets that are at least as short as those based on Bonferroni, it is to be preferred. However, in simulations (Section~\ref{sec: sim}), we find that the two methods lead to almost identical confidence sets, and both methods are optimal in many high-dimensional settings (\citet[Chapter 13.5]{Lehmann:2022aa}).

Theorems~\ref{thm: coverage} and \ref{thm: size control} imply that the resulting confidence set for the ranks is valid in finite samples:

\begin{corollary}
	The confidence set $R_n$ constructed by Algorithm~\ref{alg:marg sim CS} satisfies \eqref{eq: coverage}.
\end{corollary}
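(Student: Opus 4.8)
The plan is to chain together the two theorems already established, so the proof is essentially a bookkeeping argument. First I would observe that Algorithm~\ref{alg:marg sim CS} tests each individual hypothesis $H_{j,k}$, $(j,k)\in I$, using the non-randomized test whose $p$-value is given in \eqref{eq: p-val}; by Theorem~\ref{thm: size control} (or rather the slightly conservative non-randomized modification discussed immediately before it), rejecting $H_{j,k}$ whenever $\hat p_{j,k}\le\beta_{j,k}$ yields a test of level at most $\beta_{j,k}$ for each pair. So each marginal test is valid at its nominal level.

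Next I would verify that the particular choice of cutoffs in Step~3 of the algorithm — either Bonferroni with $\beta_{j,k}=\alpha/|I|$ for all $(j,k)$, or the Holm step-down procedure applied to the $p$-values $\hat p_{j,k}$ — controls $FWER_I$ in the sense of \eqref{eq: FWER control}. For Bonferroni this is the elementary union bound: if $I_0\subseteq I$ denotes the set of true hypotheses, then $P\{\text{reject some }H_{j,k},(j,k)\in I_0\}\le\sum_{(j,k)\in I_0}P\{\hat p_{j,k}\le\alpha/|I|\}\le |I_0|\cdot\alpha/|I|\le\alpha$, using only that each $\hat p_{j,k}$ is (super-)uniform under $H_{j,k}$, which follows from Theorem~\ref{thm: size control}. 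For Holm, I would invoke the standard fact (e.g.\ \citet[Chapter 9]{Lehmann:2022aa}) that the step-down procedure built from valid marginal $p$-values controls the familywise error rate at level $\alpha$ regardless of the dependence structure among the $p$-values; no independence or positive-dependence assumption is needed, which is important here since the $\hat p_{j,k}$ are manifestly dependent. Either way, \eqref{eq: FWER control} holds for the relevant $I\in\{J^{\rm lower},J^{\rm upper},J^{\rm two\text{-}sided}\}$ chosen in Step~2.

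Finally I would apply Theorem~\ref{thm: coverage}: the sets $\text{Rej}_j^-$ and $\text{Rej}_j^+$ collected in Step~4 are exactly those defined by \eqref{eq: rej minus} and \eqref{eq: rej plus} for the testing procedure just described, and $R_n$ constructed in Step~5 is exactly \eqref{eq: def joint CS}. Since that procedure satisfies \eqref{eq: FWER control}, Theorem~\ref{thm: coverage} delivers \eqref{eq: coverage} directly, which is the claim.

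I do not anticipate a genuine obstacle here — the corollary is deliberately a corollary — but if there is a subtle point to get right it is the passage from the randomized UMPU test of Theorem~\ref{thm: size control} to the non-randomized rule ``reject iff $\hat p_{j,k}\le\beta_{j,k}$'' actually used in the algorithm: one must note that this non-randomized test is (weakly) conservative relative to the randomized one, hence still of level $\le\beta_{j,k}$, so that the super-uniformity of $\hat p_{j,k}$ under $H_{j,k}$ — the only property the FWER arguments consume — continues to hold. Everything else is a restatement of Theorems~\ref{thm: coverage} and~\ref{thm: size control} together with textbook properties of the Bonferroni and Holm corrections.
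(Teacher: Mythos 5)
Your proposal is correct and follows essentially the same route as the paper, which derives the corollary by combining Theorem~\ref{thm: size control} (level-$\beta_{j,k}$ validity of each pairwise test, including the slightly conservative non-randomized version based on $\hat p_{j,k}$) with the Bonferroni or Holm control of $FWER_I$ and then invoking Theorem~\ref{thm: coverage}. Your added remark on the passage from the randomized UMPU test to the non-randomized $p$-value rule is exactly the point the paper handles by noting that rule is weakly conservative, so nothing further is needed.
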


One important aspect to note is that we have not imposed any assumptions on the vector of probabilities $\theta$, besides it being the vector of probabilities associated with a multinomial distribution for $p$ categories. In particular, the confidence set $R_n$ satisfies the coverage result \eqref{eq: coverage} regardless of whether any of the elements of $\theta$ are equal to each other (``ties'') or close to each other (``near-ties''). This is an important feature of our confidence sets for the ranks because it ensures that the coverage property does not break down when some categories are observed with equal or similar counts. In contrast, a ``naive'' bootstrap confidence set is valid only in the absence of ties and may substantially under-cover when there are near-ties (see Remark~\ref{rem: naive bootstrap} and the simulations in Section~\ref{sec: sim}).

\begin{remark}[Clopper-Pearson]\label{rem: CP}
	An alternative approach to the construction of confidence sets for the ranks that are valid in finite samples could be based on Clopper-Pearson intervals for binomial probabilities (\cite{Clopper:1934dd}). To see this, note that one could form a Clopper-Pearson interval separately for each element of $\theta$. With a Bonferroni correction one could then combine the marginal confidence intervals into a simultaneous confidence set for the vector $\theta$, which would be valid in finite samples. Given this simultaneous confidence set for $\theta$, one could apply the approach by \cite{Klein:2020oi} to form a simultaneous confidence set for the ranks of all categories, which would also be valid in finite samples.  We compare this approach with ours in the simulations of Section~\ref{sec: sim} and find that the two methods often perform similarly well, but sometimes the Clopper-Pearson intervals are meaningfully wider. One reason why this construction may lead to wide confidence sets is that \cite{Klein:2020oi}'s approach implicitly tests whether two success probabilities are equal by checking whether the Clopper-Pearson intervals for the two success probabilities overlap or not. This construction is excessively crude compared to the use of confidence sets for the differences.
\end{remark}

\subsection{Confidence Sets for the \texorpdfstring{$\tau$}{}-Best} \label{sec:taubest}

Let $R_n \equiv \prod_{j\in J_0} R_{n,j}$ be a simultaneous lower confidence bound on the ranks of all categories, i.e., each $R_{n,j}$ has upper bound equal to $p$ and $R_n$ satisfies \eqref{eq: coverage} for $J_0=J$. Then, the projection
\begin{equation}\label{eq: def proj tau-best}
	R^{\tau-\rm{best}}_n \equiv \left\{j \in J : \tau \in R_{n,j} \right\}
\end{equation}
is a confidence set for the $\tau$-best categories:
\begin{corollary}
	If $R_n \equiv \prod_{j\in J_0} R_{n,j}$ is defined as in Theorem~\ref{thm: coverage} for $J_0=J$ and $I=J^{\rm lower}$, then $R^{\tau-\rm{best}}_n$ as defined in \eqref{eq: def proj tau-best} satisfies \eqref{eq:taubestcoverage}.
\end{corollary}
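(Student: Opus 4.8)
The plan is to derive the statement as a short ``projection'' consequence of Theorem~\ref{thm: coverage}. The only structural input needed is that, for the lower-bound construction (i.e.\ $I=J^{\rm lower}$, $J_0=J$), each set $R_{n,j}$ has the form $R_{n,j}=\{|\text{Rej}_j^-|+1,\ldots,p\}$ by \eqref{eq: def joint CS}; in particular its largest element is $p$, so $R_{n,j}$ is an \emph{upper set} in $\{1,\ldots,p\}$: if $v\in R_{n,j}$ and $v\le v'\le p$, then $v'\in R_{n,j}$.

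The heart of the argument is the deterministic event inclusion
\begin{equation*}
	\big\{\,r_j\in R_{n,j}\text{ for all }j\in J\,\big\}\ \subseteq\ \big\{\,R_0^{\tau-\rm{best}}\subseteq R^{\tau-\rm{best}}_n\,\big\}\,.
\end{equation*}
To prove it, I would fix a realization in the left-hand event and take an arbitrary $j\in R_0^{\tau-\rm{best}}$, i.e.\ a category with $r_j\le\tau$. On this event $r_j\in R_{n,j}$, and since $R_{n,j}$ is an upper set containing $r_j$ and $\tau\in J$ satisfies $r_j\le\tau\le p$, we conclude $\tau\in R_{n,j}$. By the definition \eqref{eq: def proj tau-best} of $R^{\tau-\rm{best}}_n$, this says exactly that $j\in R^{\tau-\rm{best}}_n$. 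Since $j$ was an arbitrary element of $R_0^{\tau-\rm{best}}$, we get $R_0^{\tau-\rm{best}}\subseteq R^{\tau-\rm{best}}_n$ on the event in question, which establishes the inclusion.

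Finally I would invoke Theorem~\ref{thm: coverage}: under the hypotheses of the corollary (which are those of that theorem with $J_0=J$ and $I=J^{\rm lower}$), the left-hand event has probability at least $1-\alpha$, i.e.\ $P\{r_j\in R_{n,j}\ \forall j\in J\}\ge 1-\alpha$. Combining this bound with the event inclusion above immediately yields $P\{R_0^{\tau-\rm{best}}\subseteq R^{\tau-\rm{best}}_n\}\ge 1-\alpha$, which is \eqref{eq:taubestcoverage}. I do not expect a substantive obstacle here: the entire content is the observation that a simultaneous lower bound on all ranks is dual to a confidence statement about the membership of the $\tau$-best set, and the one small point to get right is the monotonicity (``upper set'') property of each $R_{n,j}$ together with the trivial fact that $\tau\le p$.
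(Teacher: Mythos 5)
Your proposal is correct and takes essentially the same route as the paper: the paper justifies this corollary by exactly the projection/duality observation you spell out, namely that with $I=J^{\rm lower}$ each $R_{n,j}$ is an interval with upper endpoint $p$, so on the simultaneous coverage event $r_j\le\tau$ forces $\tau\in R_{n,j}$, and Theorem~\ref{thm: coverage} supplies the $1-\alpha$ bound. Your write-up merely makes explicit the ``upper set'' step that the paper leaves implicit; there is no gap.
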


The construction of such a confidence set using the non-randomized test with $p$-values in \eqref{eq: p-val} can thus be summarized as follows:

\begin{algorithm}\label{alg: tau-best}
	\begin{enumerate}
		\item Set $J_0=J$ and $I=J^{\rm lower}$.
		\item Perform Steps 3--5 of Algorithm~\ref{alg:marg sim CS} to obtain $R_n\equiv \prod_{j\in J_0} R_{n,j}$.
		\item Construct $R^{\tau-\rm{best}}_n$ as defined in \eqref{eq: def proj tau-best}.
	\end{enumerate}
\end{algorithm}

\begin{remark}[$\tau$-Worst]
	A confidence set for the $\tau$-worst categories, $R_0^{\tau-\rm{worst}} \equiv \{j \in J : r_j \geq p-\tau+1 \}$ can be constructed in a similar fashion as in Algorithm~\ref{alg: tau-best} for the $\tau$-best. First, set $J_0=J$ and $I=J^{\rm upper}$. Then perform Steps 3--5 of Algorithm~\ref{alg:marg sim CS} to obtain $R_n\equiv \prod_{j\in J_0} R_{n,j}$. Finally, construct the confidence set $R^{\tau-\rm{worst}}_n \equiv \left\{j \in J : p-\tau+1 \in R_{n,j} \right\}$.
\end{remark}

\section{Bootstrap Confidence Sets}
\label{sec: bootstrap}

As was previously seen, confidence sets for ranks can be based on simultaneous tests of the  hypotheses
$H_{j,k}$ with $(j,k) \in I$ for an appropriate set of indices $I \subset J^2$.  In a similar way, inference for ranks can also be based on simultaneous bootstrap confidence sets  $C_n ( 1- \alpha , I)$ for the differences  $(\theta_j - \theta_k\colon (j,k)\in I)$, which we now describe.

Let $X^*\equiv(X_1^*,\ldots,X_p^*)$ denote a bootstrap draw from the the multinomial distribution with parameters $n$ and $\hat{\theta}\equiv X/n$. Define the bootstrap estimator $\hat{\theta}^*\equiv X^*/n$ and 
\begin{equation*}
	\hat \sigma^*_{j,k} \equiv  \sqrt{\hat{\theta}^*_j ( 1-  \hat{\theta}^*_j ) + \hat{\theta}^*_k (1-  \hat{\theta}^*_k ) + 2 \hat{\theta}^*_j \hat{\theta}^*_k} ~.
\end{equation*}
Consider the bootstrap statistic
$$T_{{\rm lower}, n}^*(I) \equiv \max_{(j,k) \in I} \frac{\hat{\theta}^*_j - \hat{\theta}^*_k - (\hat{\theta}_j-\hat{\theta}_k)}{\hat{\sigma}^*_{j,k} / \sqrt{n} }, $$
where we adopt the convention that $0/0 = 0$ and $c/0 = \text{sign}(c)\infty$ for $c\neq 0$, and denote by $c_{{\rm lower},n}(1-\alpha,I)$ the $(1-\alpha)$-quantile of $T_{{\rm lower}, n}^*(I)$ conditional on the data.\footnote{\label{foot: zeros}In a given bootstrap sample, the ratio inside the max of $T_{{\rm lower},n}^*$ can have a zero denominator and/or zero numerator. For instance, when two categories $j$ and $k$ both have small success frequencies in the data ($\hat{\theta}_j$ and $\hat{\theta}_k$ are small), then it is possible that a given bootstrap sample does not contain any success for either of the two categories, i.e., $\hat{\theta}_j^*=\hat{\theta}_j^*=\hat{\sigma}_{j,k}^* = 0$, and the denominator is zero. When $\hat{\theta}_j<\hat{\theta}_k$, then the resulting critical value $c_{{\rm lower},n}(1-\alpha,I)$ equals $\infty$. On the other hand, when the frequencies in the data are equal ($\hat{\theta}_j=\hat{\theta}_k$), then both the numerator and denominator of the ratio are zero and the maximum is determined by other bootstrap samples that produce a positive ratio.} We can then construct lower confidence bounds for the vector of differences $\Delta_I\equiv (\theta_j-\theta_k\colon (j,k)\in I)$ by
\begin{equation*}
	C_{{\rm lower}, n}(1 - \alpha,I) \equiv \prod_{(j,k) \in I} C_{{\rm lower}, n,j,k}(1 - \alpha,I)
\end{equation*}
with
$$C_{{\rm lower}, n,j,k}(1 - \alpha,I) \equiv \Biggr [ \hat{\theta}_j - \hat{\theta}_k - c_{{\rm lower}, n}(1- \alpha,I) \frac{ \hat{\sigma}_{j,k}}{\sqrt{n}}, \infty \Biggr ) $$
and $\hat \sigma_{j,k} \equiv  \sqrt{\hat{\theta}_j ( 1-  \hat{\theta}_j ) + \hat{\theta}_k (1-  \hat{\theta}_k ) + 2 \hat{\theta}_j \hat{\theta}_k}$. As long as all $\theta_j$, $j\in J$, are nonzero, this confidence set covers the vector of true differences with probability $1-\alpha$, asymptotically as the sample size $n$ tends to infinity:
\begin{equation}\label{eq: CS diff coverage}
	\lim_{n\to\infty} P\{\Delta_I \in C_{{\rm lower},n}(1-\alpha,I) \} = 1-\alpha.
\end{equation}
Appendix~\ref{app: bootstrap} provides a formal justification of this claim and further shows that the coverage probability is no less than $1-\alpha$ when some $\theta_j=0$.  Let $I$ be equal to one of the sets $J^{\rm lower}$, $J^{\rm upper}$, or $J^{\rm two-sided}$ depending on which type of confidence set for the ranks is desired. Consider the test that rejects $H_{j,k}$ iff $C_{{\rm lower},n,j,k}(1-\alpha,I)$ lies entirely above zero. Then, based on this test, form the sets $\text{Rej}_{j}^{-}$ and $\text{Rej}_{j}^{+}$ as in \eqref{eq: rej minus} and \eqref{eq: rej plus}. The bootstrap confidence set for the ranks of categories in $J_0$ can then be constructed as in \eqref{eq: def joint CS}; denote the resulting confidence set by $R_n^{\rm boot} \equiv \prod_{j\in J_0} R_{n,j}^{\rm boot}$. By an argument similar to that in Theorem~3.3 in \cite{Mogstad:2024aa}, the probability that this confidence set covers the true ranks is bounded from below by the probability that the vector of differences, $\Delta_I$, is covered by $C_{\rm lower, n}(1 - \alpha,I)$. Therefore, the validity of the bootstrap in the sense of \eqref{eq: CS diff coverage} implies that the bootstrap confidence set for the ranks also covers the true ranks with probability at least $1-\alpha$ in the limit as $n\to \infty$. The following result formalizes this discussion and shows that the validity of the bootstrap does not require any further assumptions:

\begin{theorem}\label{thm: bootstrap validity}
	For $R_n^{\rm boot}$ defined in the previous paragraph, we have
	\begin{equation} \label{eq: asymptotic coverage}
	 \liminf_{n\to\infty} P\left\{r_{j} \in R_{n,j}^{\rm boot}\;\forall j\in J_0\right\} \geq 1 - \alpha.
	\end{equation}
\end{theorem}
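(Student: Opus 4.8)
Here I sketch the proof I would give. The plan is to bound the coverage probability of $R_n^{\rm boot}$ from below by the simultaneous coverage probability of the vector of differences $\Delta_I$ via a purely deterministic inclusion of events, and then to pass to the limit using the bootstrap validity statement \eqref{eq: CS diff coverage} together with its extension to the case of zero probabilities, both established in Appendix~\ref{app: bootstrap}. Concretely, I would set $E_n \equiv \{\Delta_I \in C_{{\rm lower},n}(1-\alpha,I)\}$, show that $E_n \subseteq \{r_j \in R_{n,j}^{\rm boot}\;\forall j\in J_0\}$, deduce $P\{r_j \in R_{n,j}^{\rm boot}\;\forall j\in J_0\} \geq P(E_n)$ for every $n$, and finally let $n\to\infty$.

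For the inclusion, suppose $E_n$ occurs and fix $j\in J_0$. If $k\in\text{Rej}_j^-$, then the test has rejected $H_{k,j}$, which by construction means that $C_{{\rm lower},n,k,j}(1-\alpha,I)$ lies entirely above zero, i.e., its left endpoint $\hat\theta_k-\hat\theta_j-c_{{\rm lower},n}(1-\alpha,I)\,\hat\sigma_{k,j}/\sqrt n$ is strictly positive; note $(k,j)\in I$ since $j\in J_0$. On $E_n$ the true difference $\theta_k-\theta_j$ lies in this interval, hence $\theta_k-\theta_j \geq \hat\theta_k-\hat\theta_j-c_{{\rm lower},n}(1-\alpha,I)\,\hat\sigma_{k,j}/\sqrt n > 0$, so the claim $\theta_j<\theta_k$ is correct. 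The degenerate configurations in which $c_{{\rm lower},n}(1-\alpha,I)=\infty$ or $\hat\sigma_{k,j}=0$ only shrink the rejection region, per the conventions in footnote~\ref{foot: zeros}, and so pose no difficulty. The identical argument applied to the pairs $(j,k)$ shows that every claim $\theta_j>\theta_k$ recorded in $\text{Rej}_j^+$ is likewise correct on $E_n$. Thus, on $E_n$, all claims recorded in $\text{Rej}_j^-$ and $\text{Rej}_j^+$ for every $j\in J_0$ are correct, and the deterministic argument preceding Theorem~\ref{thm: coverage} then yields $r_j\in R_{n,j}^{\rm boot}$ for all $j\in J_0$, which is the desired inclusion.

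It remains to take limits. By \eqref{eq: CS diff coverage}, $\lim_{n\to\infty}P(E_n)=1-\alpha$ whenever all $\theta_j>0$, while the extension in Appendix~\ref{app: bootstrap} gives $\liminf_{n\to\infty}P(E_n)\geq 1-\alpha$ in general; combining this with the inclusion above gives $\liminf_{n\to\infty}P\{r_j\in R_{n,j}^{\rm boot}\;\forall j\in J_0\}\geq\liminf_{n\to\infty}P(E_n)\geq 1-\alpha$, which is \eqref{eq: asymptotic coverage}. I should stress that the short deterministic inclusion is essentially the entire content of the theorem itself; the genuinely substantive work — pinning down the limiting distribution of the studentized maximum $T_{{\rm lower},n}^*(I)$ and, in particular, showing that the conventions $0/0=0$ and $c/0=\mathrm{sign}(c)\infty$ do not make the quantile $c_{{\rm lower},n}(1-\alpha,I)$ behave badly when some $\theta_j$ are small or equal to zero — is the main obstacle, and it is handled in Appendix~\ref{app: bootstrap} rather than in the proof of Theorem~\ref{thm: bootstrap validity}.
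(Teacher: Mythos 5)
Your proof is correct and follows essentially the same route as the paper's: the paper likewise reduces \eqref{eq: asymptotic coverage} to the bound $P\{r_j \in R_{n,j}^{\rm boot}\;\forall j\in J_0\} \geq P\{\Delta_I \in C_{{\rm lower},n}(1-\alpha,I)\}$ (citing the argument of Theorem~3.3 in \cite{Mogstad:2024aa}, which is the deterministic inclusion you spell out) and then invokes Lemma~\ref{lem: bootstrap validity for CS for differences}. Your version is simply more explicit about the inclusion step, and your closing assessment that the substantive work lives in Appendix~\ref{app: bootstrap} matches the paper's structure.
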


\begin{remark}[Exactness]
It is possible to show that there exists $\theta$ such that
$$\lim_{n \rightarrow \infty} P\left\{R_j \in R_{n,j}^{\rm boot}\;\forall j\in J_0\right\} = 1 - \alpha,$$ where $R_j$ is defined as in Remark \ref{rem: set of ranks}.  In this sense, the bootstrap-based confidence sets described above are non-conservative.
\end{remark}

\begin{remark}[Uniformity]
	The validity of the bootstrap confidence set for the vector of differences as in \eqref{eq: CS diff coverage} also holds uniformly over data-generating processes in certain classes of distributions. Such a statement could be established using the results in \cite{Romano:2012kj}. One important assumption for the applicability of their results to our setting is that the elements of $\theta$ need to be bounded away from $0$ and $1$. As in \cite{Mogstad:2024aa}, uniform validity of the confidence sets for the differences then implies uniform validity of the confidence sets for the ranks.
\end{remark}

\begin{remark}[Stepwise Improvements]\label{rem: stepwise}
	One could use stepdown procedures from \cite{romano2005gf} to improve the confidence sets for the ranks. See \cite{Mogstad:2024aa} for more details.
\end{remark}

\begin{remark}[Two-sided Confidence Sets]\label{rem: two-sided}
	Suppose the goal is to construct a rectangular two-sided confidence set for the ranks of categories in $J_0$. Instead of testing the one-sided hypotheses $H_{j,k}$ for all pairs in the large set $J^{\rm two-sided}$ with the one-sided confidence sets for the differences, one could also test whether the differences are zero using a smaller number of two-sided confidence sets for the differences.

	To see this let $C_{{\rm symm}, n}(1 - \alpha,I) \equiv \prod_{(j,k) \in I} C_{{\rm symm}, n,j,k}(1 - \alpha,I)$ with 
	\begin{align*}
		C_{{\rm symm}, n,j,k}(1 - \alpha, I) &\equiv \Biggr [\hat{\theta}_j - \hat{\theta}_k \pm c_{{\rm symm}, n}(1- \alpha,I) \frac{ \hat{\sigma}_{j,k}}{\sqrt{n}} \Biggr ],
	\end{align*}
	where $c_{{\rm symm},n}(1-\alpha,I)$ denotes the $(1-\alpha)$-quantile of 
	$$T_{{\rm symm}, n}^*(I) \equiv \max_{(j,k) \in I} \frac{\left|\hat{\theta}^*_j - \hat{\theta}^*_k - (\hat{\theta}_j-\hat{\theta}_k)\right|}{\hat{\sigma}^*_{j,k} / \sqrt{n} }$$
	conditional on the data. Then, set $I=J^{\rm upper}$ and compute
	\begin{align*}
		N_{j}^{-} &\equiv \{k\in J\setminus \{j\}\colon C_{{\rm symm}, n,j,k}(1 - \alpha, I) \text{ lies entirely below zero }\}\\
		N_{j}^{+} &\equiv \{k\in J\setminus \{j\}\colon C_{{\rm symm}, n,j,k}(1 - \alpha, I) \text{ lies entirely above zero }\}
	\end{align*}
	which indicate the categories $k$ with probabilities strictly larger or smaller than $\theta_j$. A confidence set for the ranks of categories in $J_0$ can then be formed as in \eqref{eq: def joint CS}, replacing $\text{Rej}_j^-$ and $\text{Rej}_j^+$ by $N_j^-$ and $N_j^+$, respectively. By arguments analogous to those for the one-sided confidence sets, the resulting confidence set for the ranks of categories in $J_0$ then covers the true ranks with probability approaching at least $1-\alpha$ as $n\to\infty$.
\end{remark}

\begin{remark}[Studentization]\label{rem: stud}
	The bootstrap procedure in Remark~\ref{rem: two-sided} may perform poorly in the sense of under-covering the true ranks when there are many categories and all estimated probabilities $\hat{\theta}_1,\ldots,\hat{\theta}_p$ are small. In such situations, the ratio in the definition of the bootstrap statistic may evaluate to $0/0$ on many bootstrap samples, leading to a critical value that is too small. In addition, the bootstrap procedure may perform poorly in the sense of yielding confidence sets that are very wide when there are two or more categories with small estimated probabilities. In such situations, there may be bootstrap samples without any successes for categories $j$ and $k$, so the ratio in the definition of the bootstrap statistic evaluates to $\infty$ and the resulting critical value is equal to $\infty$; see Footnote~\ref{foot: zeros}.  For these reasons, it may be beneficial not to studentize $T^*_{{\rm symm},n}(I)$, in which case one would also remove $\hat{\sigma}_{j,k}$ from the expression of $C_{{\rm symm},n,j,k}(1-\alpha,I)$.  These two approaches are compared further in the simulations in Section~\ref{sec: sim largep}.  
\end{remark}

\begin{remark}[``Naive'' Bootstrap]\label{rem: naive bootstrap}
	Suppose the goal is to construct a confidence set for the rank of a single category. The confidence set $R_n$ based on Algorithm~\ref{alg:marg sim CS} was shown to be valid in finite samples, regardless of the value of the vector $\theta$. In particular, there may be an arbitrary number of ties or near-ties in $\theta$. Similarly, the confidence set $R_n$ based on the bootstrap as proposed in this section is asymptotically valid regardless of the number of ties or near-ties in $\theta$. On the other hand, the bootstrap as proposed by, e.g., \cite{Goldstein:1996re} performs poorly when, for some $k \neq j$, $\theta_k$ is (close to) equal to $\theta_j$. For concreteness, consider the following ``naive'' bootstrap procedure. For a category $j$, denote by $\hat{\theta}_j^*$ the estimator of $\theta_j$ computed on a bootstrap sample and let $\hat{r}_j^*$ be the rank computed using the bootstrap estimators $\hat{\theta}_1^*,\ldots,\hat{\theta}_p^*$. Confidence sets for $r_j$ could then be constructed using upper and/or lower empirical quantiles of $\hat{r}_j^*$ conditional on the data. \cite{Mogstad:2024aa} show that this intuitive approach fails to deliver the desired coverage property when there are ties (unless $p=2$). In fact, the coverage probability tends to zero as $p$ grows. For further discussion, see \cite{Xie:2009oi} and \cite{Hall:2009oi}. In contrast, our bootstrap approach does not rely on a consistent estimator of the distribution of estimated ranks but rather on the availability of simultaneous bootstrap confidence sets for the differences $\Delta_I$ with asymptotic coverage no less than the desired level.  Such simultaneous confidence sets are available under weak conditions and, in particular, do not restrict the configuration of the vector of probabilities $\theta$. In comparison to \cite{Xie:2009oi}, our bootstrap procedure also circumvents smoothing of the indicator in the definition of the ranks and thus the need for choosing such a smoothing parameter.
\end{remark}

\section{Ranking Political Parties by Voters' Support in the Australian Election Study 2019}
\label{sec: emp}
In this section, we apply the inference procedures from Sections~\ref{sec:main} and \ref{sec: bootstrap} to examine the ranking of political parties by their share of voters' support in the Australian Election Study (AES). The AES has fielded representative surveys after every federal election since 1987 and provides the most comprehensive source of evidence on political attitudes in Australia \citep{cameronAES}. We use AES data from 2019 with address-based stratified random sampling from the Geocoded National Address File (G-NAF)\citep{beanAES}.\footnote{The original sampling methodology description reads: ``Within the parameters outlined above, the new AES sample was selected from the G-NAF database using a stratified sample design in accordance with the geographical distribution of the Australian residential population aged 18 years and over. GNAF sample selections were supplied by the MasterSoft Group. A total of 3,944 sample records were randomly generated within 15 geographic strata (see Table 2) to ensure sufficient sample was utilised to achieve the desired number of responses for the AES"\citep{beanAES}. We interpret this sampling as being i.i.d. within territories.} Table~\ref{tab: DescrStat} shows a total of 3,944 sampled eligible voters from all fifteen Australian territories resulting in 1,211 respondents. In the subsequent analysis we work with respondents and refer to them as ``sample". 

To examine which political parties are on the top and the bottom of the ranking in each Australian territory, we use respondents' answers to the survey question ``Generally speaking, do you usually think of yourself as Liberal, Labor, National, Greens or other (specify)?". The answer categories include political parties; ``Skipped" and ``No answer" categories that we group in a single ``No answer" category; ``Independent", ``Swing Voter" and ``No party" categories that we group in a single ``No party" category. For more populous territories, we observe between seven and ten categories with positive support shares.

By applying the inference procedures from Sections~\ref{sec:main} and \ref{sec: bootstrap} we compute (i) the marginal confidence set for the rank of a particular political party and (ii) the simultaneous confidence set for the ranks of all parties. Thus, (i) is relevant if one is interested whether a particular party is on the top (bottom) of ranking by the voters' support, and (ii) is relevant if one is interested in the entire ranking of parties.

\begin{table}[t]
\centering
\begin{tabular}{lccc}
   \midrule
Territory & Sampled Voters & Respondents & Categories \\ 
   \midrule
Greater Sydney & 816 & 238 & 8 \\ 
  Greater Melbourne & 780 & 234 & 7 \\ 
  Rest of New South Wales & 445 & 144 & 8 \\ 
  Rest of Queensland & 402 & 121 & 10 \\ 
  Greater Brisbane & 378 & 115 & 8 \\ 
  Greater Perth & 319 & 93 & 7 \\ 
  Rest of Victoria & 248 & 82 & 8 \\ 
  Greater Adelaide & 217 & 81 & 9 \\ 
  Rest of Western Australia & 87 & 26 & 7 \\ 
  Australian Capital Territory & 67 & 24 & 4 \\ 
  Rest of South Australia & 63 & 17 & 4 \\ 
  Rest of Tasmania & 47 & 16 & 3 \\ 
  Greater Hobart & 35 & 12 & 4 \\ 
  Greater Darwin & 24 & 6 & 3 \\ 
  Rest of Northern Territories & 16 & 2 & 2 \\ 
   \midrule
Total: & 3,944 & 1,211 &  \\ 
   \bottomrule
\end{tabular}
\caption{Australian Election Study 2019 G-NAF 
    stratified random sample of 3,944 eligible
    voters resulting in 1,211 respondents.
    In the subsequent analysis we work with 
    respondents and refer to them as ``sample". 
    The last column shows the number of categories
    with positive support share in each territory 
    measured by answers to AES2019 survey question 
    ``Generally speaking, do you usually think of 
    yourself as Liberal, Labor, National, Greens 
    or other(specify)?". The answers include 
    political parties, ``Skipped" and ``No answer" 
    categories that we group in a single ``No answer" 
    category; ``Independent", ``Swing Voter" and 
    ``No party" categories that we group in a single
    ``No party" category.} 
\label{tab: DescrStat}
\end{table}

\subsection{Marginal Confidence Sets for the Ranks of Political Parties in Greater Melbourne }
\label{sec: Melbourne}

Consider first a particular territory, Greater Melbourne. Figure~\ref{fig: Melbourne} shows the point estimates and standard errors for voters' support share for each category with positive number of supporters. The leftmost panel in row A shows considerable variation in point estimates across categories from 0.371 for the most supported (Labor Party) to 0.004 for the least supported (National Party). Support shares on the top and the bottom of the ranking are close to each other, while the shares in the middle are better separated.

The middle panel in row A of Figure~\ref{fig: Melbourne} presents the 95\% marginal confidence sets for the rank of each category implemented using five procedures: the exact Holm (\textbf{``exactHolm"}) described in the Algorithm~\ref{alg:marg sim CS}, Clopper-Pearson (\textbf{``CP"}) as in Remark~\ref{rem: CP}, non-studentized (\textbf{``boot"})  and studentized (\textbf{``bootStud"}) versions of the bootstrap as in Section~\ref{sec: bootstrap} and the ``naive" bootstrap (\textbf{``naive"}) as in Remark~\ref{rem: naive bootstrap}. The first two methods have been shown to be valid in finite samples, and the studentized and non-studentized bootstrap are motivated by asymptotic validity. The ``naive" bootstrap is asymptotically valid in the absence of (near-)ties (see Remark~\ref{rem: naive bootstrap}), but invalid otherwise.

 The resulting confidence sets exhibit four pronounced patterns. First, the ``naive" bootstrap confidence sets are the tightest. As indicated in Remark~\ref{rem: naive bootstrap}, the ``naive" bootstrap produces confidence sets that fail to cover the true ranks with the desired probability when there are (near-) ties. Our simulations in Section~\ref{sec: sim} confirm that, in datasets like the one from Greater Melbourne, the ``naive" bootstrap does indeed produce short confidence sets at the expense of its coverage frequency lying substantially below the desired nominal level. Second, the exact Holm procedure produces weakly shorter confidence sets than Clopper-Pearson and the studentized and non-studentized bootstraps. For example, the exact Holm confidence sets for the ranks of the categories ``No party" and ``Greens'' contain only ranks three and four, respectively, while Clopper-Pearson and the studentized and non-studentized bootstraps produce confidence sets containing at least two ranks. Third, both the studentized and non-studentized bootstrap confidence sets are wide in the middle of the ranking. Their length for the 4th category, ``Greens", is equal to four compared to the maximum possible length of six. Fourth, the studentized bootstrap produces extremely wide confidence sets at the bottom of the ranking for the last two categories. These confidence sets cover the entire ranking and correspond to infinite critical values. In contrast, the length of the finite-sample valid confidence sets for the last two categories is two. These patterns suggest that the finite-sample valid confidence sets for parties in Greater Melbourne are informative, and the exact Holm procedure is the most informative among the valid procedures (i.e., excluding the ``naive" bootstrap). 

As discussed in Footnote~\ref{foot: zeros}, the studentized bootstrap confidence sets are wide when the ratio in the bootstrap test statistic has a denominator that is equal to zero and a positive numerator in a substantial fraction of the bootstrap samples. This circumstance arises when at least two categories have small but positive shares in the data so that, in the bootstrap samples, $\hat{\theta}_j^* = \hat{\theta}_k^* = \hat{\sigma}_{j,k}^* = 0$ while, in the data, $|\hat{\theta}_j - \hat{\theta}_k| > 0$. One solution is to group the categories with small shares together. The leftmost panel in row B of Figure~\ref{fig: Melbourne} shows support shares when we group ``National Party", ``One Nation" and ``No answer" into a single category ``Other". The middle panel in row B shows that in the middle or at the bottom of the ranking, both the studentized and non-studentized bootstraps no longer produce confidence sets as wide as in row A. Notably, the exact Holm confidence sets are still tighter than Clopper-Pearson confidence sets. 

An alternative solution to the division by zero in bootstrap samples is to reduce the confidence level. The rightmost column of Figure~\ref{fig: Melbourne} shows the 90\% marginal confidence sets for the ranks using both the original set of categories in row A and with the three smallest categories grouped in row B. Indeed, with lower confidence level we no longer observe studentized bootstrap confidence sets covering the entire ranking. However, both finite-sample methods still produce weakly smaller confidence sets than both types of bootstrap. Furthermore, the panel with grouped small categories in row B shows that the exact Holm confidence sets for ``Greens" are less informative than for the original categories, regardless of whether the confidence level is 90\% or 95\%.

\begin{sidewaysfigure}
\centering
\includegraphics[width =0.93\textwidth]{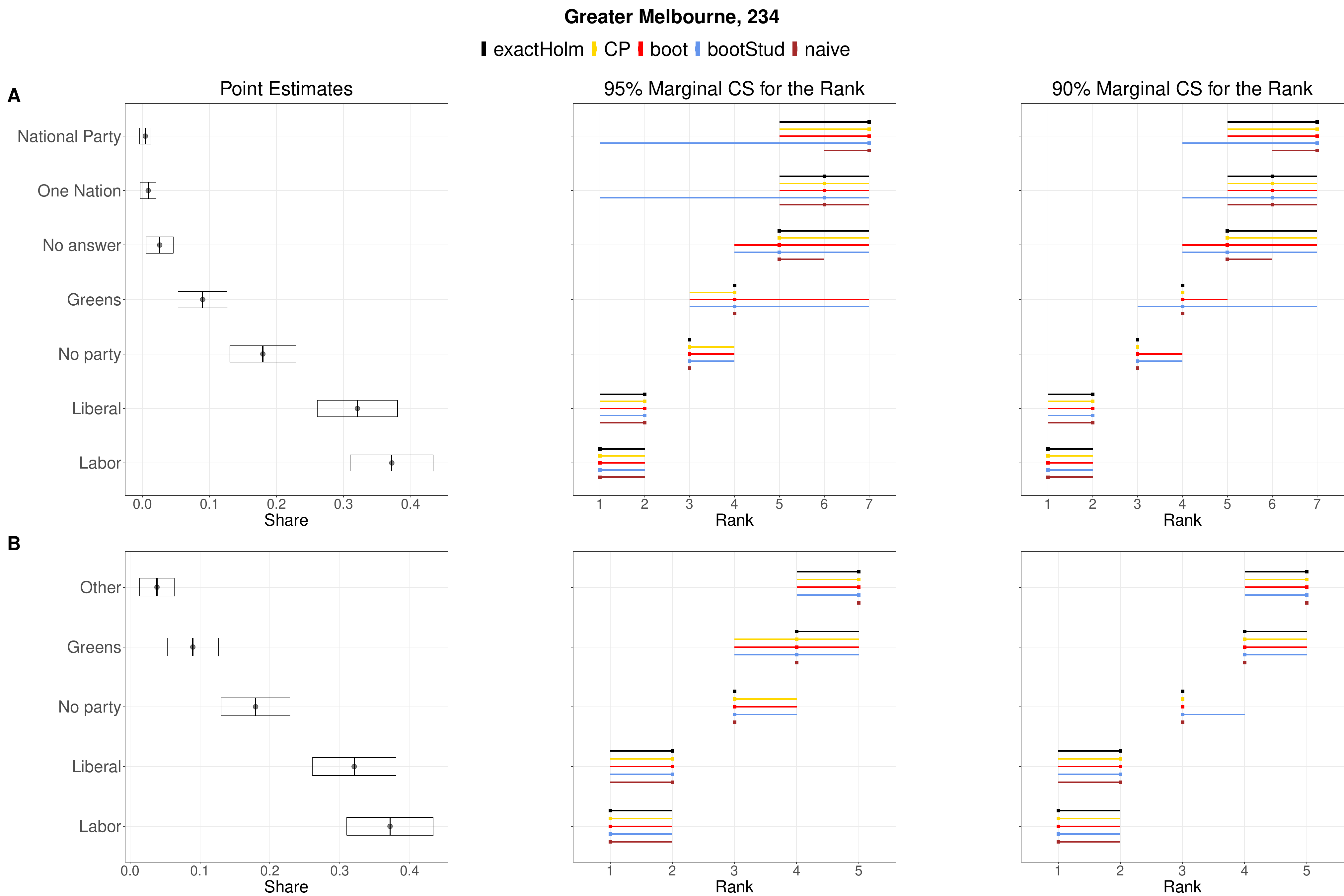}
\caption{\textbf{Left} column shows point estimates of categories' support shares in Greater Melbourne and $\pm 1.96$se. \textbf{Middle} column shows the 95\% marginal confidence sets for the rank of each category computed by five procedures for each category. \textbf{Right} column shows the 90\% marginal confidence sets for the rank of each category computed by five procedures for each category. In row \textbf{A} we use originally defined choice categories. In row \textbf{B}, we group ``National Party", ``One Nation" and ``No answer" into a single category ``Other".}
    \label{fig: Melbourne}
\end{sidewaysfigure}

\subsection{Marginal Confidence Sets for the Eight Most Populous Territories}
Next, we move beyond the example of Greater Melbourne and consider all fifteen Australian territories.  Appendix Figures~\ref{fig: Data1} and \ref{fig: Data2} show the point estimates for support shares in all territories, and Appendix Figures~\ref{fig: MargCS1} and \ref{fig: MargCS2} show the 95\% marginal confidence sets for the rank of each category in each territory computed using the same five procedures as for Greater Melbourne. Similar to our illustration for Greater Melbourne, both bootstrap procedures produce wide confidence sets in the middle and at the bottom of the ranking in the majority of populous territories. Due to small sample sizes the seven least populous territories have mostly uninformative confidence sets for all categories. Therefore, we focus our analysis on the eight most populous territories.

Figure~\ref{fig: MargCS1} shows that none of the valid methods produce tighter confidence sets than all other valid methods uniformly across all categories in all territories. For example, the finite-sample valid confidence sets are weakly tighter than both types of bootstrap confidence sets for each category in Greater Melbourne. In comparison, the bootstrap confidence sets are weakly tighter in the top part of the ranking in Greater Sydney. We summarize this finding in the top panel of Table~\ref{tab: TabCompare} that shows the percentage of category$\times$territory cases across the eight most populous territories where each method produces strictly wider 95\% marginal confidence sets for the rank than other methods. For example, the first row shows that the exact Holm confidence sets are strictly wider than Clopper-Pearson confidence sets in 6.2\% of category$\times$territory cases, strictly wider than the non-studentized bootstrap confidence sets in 12.3\% of cases, and strictly wider than the studentized bootstrap confidence sets in 1.5\% of cases. The first column shows that Clopper-Pearson confidence sets are strictly wider than the exact Holm confidence sets in 3.1\% of category$\times$territory cases, the non-studentized bootstrap confidence sets are strictly wider than the exact Holm in 29.2\% of cases, and the studentized bootstrap confidence sets are strictly wider than the exact Holm confidence sets in 47.7\% of cases. Notably, the share of cases where the studentized and non-studentized bootstrap confidence sets are strictly wider than Clopper-Pearson or the exact Holm confidence sets is substantially larger than the share of cases where Clopper-Pearson and the exact Holm confidence sets are strictly wider than both types of bootstrap confidence sets.
 
 \begin{table}[t]
\centering
\renewcommand{\arraystretch}{1.5}
\begin{tabular}{>{\arraybackslash}p{20mm}>{\centering\arraybackslash}p{20mm}>{\centering\arraybackslash}
>{\centering\arraybackslash}p{20mm}>{\centering\arraybackslash}p{20mm}>{\centering\arraybackslash}p{20mm}
}

\multicolumn{5}{c}{\textbf{Percentage of Wider 95\% Marginal CS, Row vs Column}}\\
\toprule
\multicolumn{5}{c}{\textbf{Original Set of Categories}}\\
\toprule
& exactHolm & CP & boot & bootStud   \\ 
  \midrule
  exactHolm  & \cellcolor{gray} & 6.2 & 12.3 & 1.5  \\ 
  CP & 3.1 & \cellcolor{gray} & 9.2 & 0.0  \\ 
  boot  & 29.2 & 27.7 & \cellcolor{gray} & 0.0  \\ 
  bootStud  & 47.7 & 49.2 & 41.5 & \cellcolor{gray} \\ 
    \bottomrule
\end{tabular}

\begin{tabular}{>{\arraybackslash}p{20mm}>{\centering\arraybackslash}p{20mm}>{\centering\arraybackslash}p{20mm}>{\centering\arraybackslash}p{20mm}>{\centering\arraybackslash}p{20mm}>{\centering\arraybackslash}p{20mm}
}

\multicolumn{5}{c}{\textbf{Small Categories Grouped}}\\
\toprule & exactHolm & CP & boot & bootStud  \\ 
  \midrule
exactHolm & \cellcolor{gray} & 2.5 & 22.5 & 2.5  \\ 
  CP  & 10.0 &\cellcolor{gray} & 22.5 & 0.0  \\ 
  boot& 7.5 & 0.0 & \cellcolor{gray} & 0.0 \\ 
  bootStud  & 15.0 & 5.0 & 27.5 & \cellcolor{gray} \\ 
    \bottomrule
\end{tabular}

\caption{Each cell shows the percentage of pairwise comparisons across all categories in the eight most populous territories where the inference procedure in a row produces wider 95\% marginal confidence sets for the ranks than the procedure in a column. The \textbf{top panel} shows results for the original set of categories in each territory. The \textbf{bottom panel} shows results when we group all categories except ``Liberal", ``Labor", ``Greens" and ``No party" into a single category ``Other".} 
\label{tab: TabCompare}
\end{table}
 
 The bottom panel of Table~\ref{tab: TabCompare} shows the same comparisons when we group all categories except ``Liberal", ``Labor", ``Greens" and ``No party" into a single category ``Other". As discussed above, this grouping prevents zeroes in the bootstrap test statistic denominator and excessively wide bootstrap confidence sets. As a result, the percentage of cases in rows 3 and 4 where both types of bootstrap confidence sets are strictly wider than the finite-sample valid confidence sets decreases. Interestingly, the percentage of cases where the exact Holm confidence sets are strictly wider than Clopper-Pearson confidence sets is lower with grouping, and the percentage of cases where Clopper-Pearson confidence sets are wider than the exact Holm confidence sets is larger.

\subsection{Marginal Versus Simultaneous Confidence Sets}
The analysis of marginal confidence sets answers questions about the rank of a particular party, but one may be interested in the ranking of all parties described by simultaneous confidence sets. Figure~\ref{fig: MargSimul} compares 95\% marginal confidence sets for the ranks to 95\% simultaneous confidence sets for the ranks produced by the exact Holm, Clopper-Pearson, the studentized and non-studentized bootstrap procedures for categories in Greater Melbourne. Naturally, simultaneous confidence sets are weakly wider than marginal confidence sets for each procedure. This feature is more pronounced for the studentized bootstrap confidence sets due to infinite critical values in the bootstrap test statistic for categories at the bottom of the ranking. In contrast, the finite-sample valid simultaneous confidence sets are still informative, and the exact Holm produces weakly tighter confidence sets than all other procedures. Appendix Figure~\ref{fig: MargSimul10} shows that with confidence level reduced to 90\% we no longer observe studentized bootstrap confidence sets as wide as on Figure~\ref{fig: MargSimul} since the critical value becomes finite.

The top panel of Table~\ref{tab: TabCompareSimul} shows that our findings hold in the eight most populous Australian territories with the original set of choice categories. The studentized bootstrap almost always produces strictly wider 95\% simultaneous confidences sets than other methods and never produces tighter confidence sets. The exact Holm simultaneous confidence sets are never strictly wider than Clopper-Pearson or the studentized bootstrap confidence sets, and are strictly wider than the non-studentized bootstrap confidence sets in only 9.2\% of category$\times$territory cases. In contrast, both Clopper-Pearson and the non-studentized bootstrap simultaneous confidence sets are strictly wider than the exact Holm confidence sets in 26.2\% of category$\times$territory cases.

 In the bottom panel of Table~\ref{tab: TabCompareSimul} we group all categories except ``Liberal", ``Labor", ``Greens" and ``No party" into a single category ``Other". As a result, both types of bootstrap confidence sets become tighter. In particular, the non-studentized bootstrap simultaneous confidence sets are never strictly wider than the confidence sets produced by any other inference procedure. The exact Holm confidence sets, however, are still never wider than Clopper-Pearson or the studentized bootstrap confidence sets and are strictly wider than the non-studentized bootstrap confidence sets in only 5\% of category$\times$territory cases.

\begin{figure}[H]
    \centering
    \includegraphics[width=0.99\textwidth]{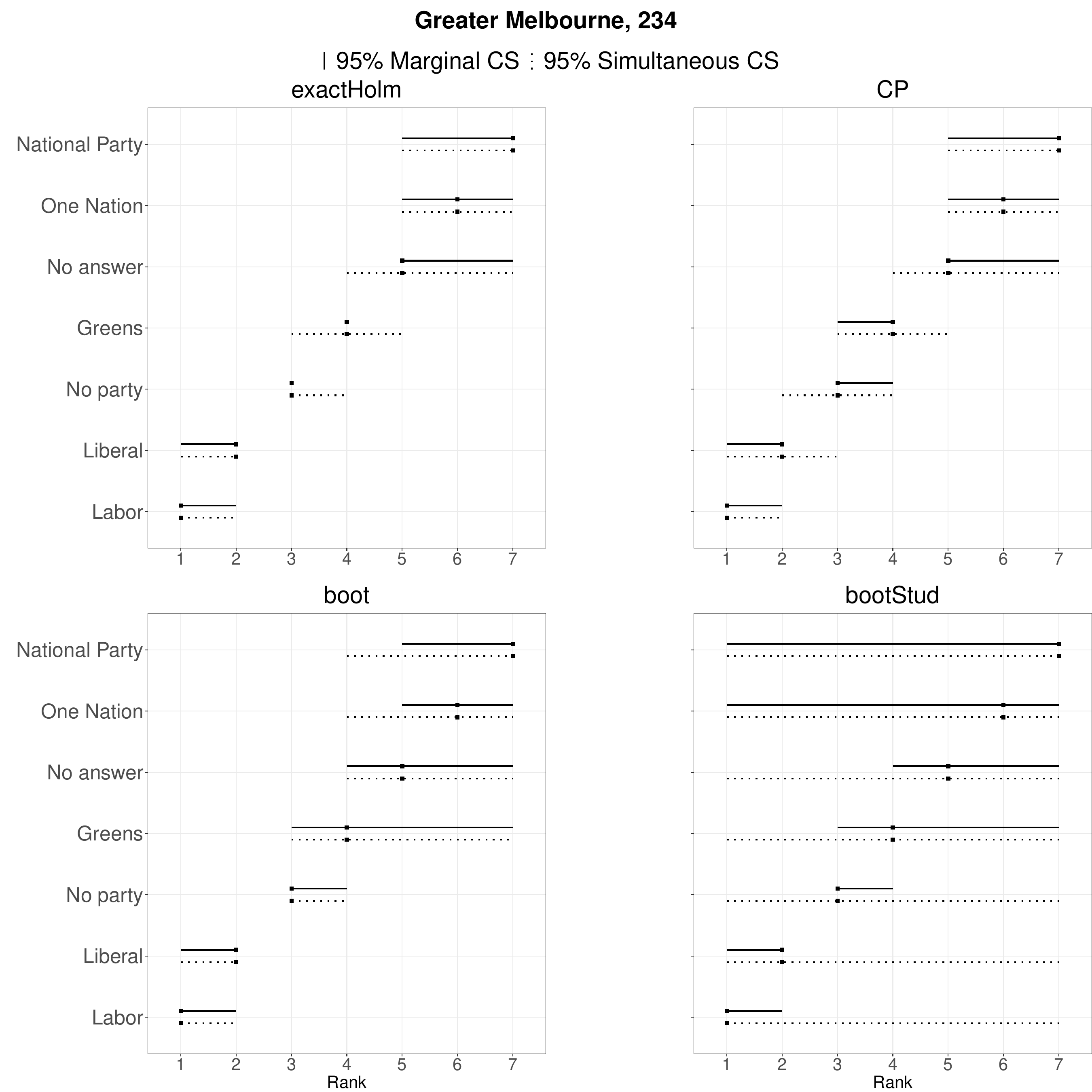}
    \caption{ 95\% marginal and 95\% simultaneous confidence sets for the ranks of categories in Greater Melbourne.  }
    \label{fig: MargSimul}
\end{figure}

\begin{table}[t]
\centering
\renewcommand{\arraystretch}{1.5}
\begin{tabular}{>{\arraybackslash}p{20mm}>{\centering\arraybackslash}p{20mm}>{\centering\arraybackslash}p{20mm}
>{\centering\arraybackslash}p{20mm}>{\centering\arraybackslash}p{20mm}>{\centering\arraybackslash}p{20mm}
}

\multicolumn{5}{c}{\textbf{Percentage of Wider 95\% Simultaneous CS, Row vs Column}}\\
\toprule
\multicolumn{5}{c}{\textbf{Original Set of Categories}}\\
\toprule
& exactHolm & CP & boot & bootStud   \\ 
  \midrule
  exactHolm  & \cellcolor{gray} & 0.0 & 9.2 & 0.0  \\ 
  CP & 26.2 & \cellcolor{gray} & 23.1 & 0.0  \\ 
  boot  & 26.2 & 21.5 & \cellcolor{gray} & 0.0  \\ 
  bootStud  & 87.7 & 83.1 & 76.9 & \cellcolor{gray} \\ 
    \bottomrule
\end{tabular}

\begin{tabular}{>{\arraybackslash}p{20mm}>{\centering\arraybackslash}p{20mm}>{\centering\arraybackslash}p{20mm}
>{\centering\arraybackslash}p{20mm}>{\centering\arraybackslash}p{20mm}>{\centering\arraybackslash}p{20mm}
}

\multicolumn{5}{c}{\textbf{Small Categories Grouped}}\\
\toprule & exactHolm & CP & boot & bootStud  \\ 
  \midrule
exactHolm & \cellcolor{gray} & 0.0 & 5.0 & 0.0  \\ 
  CP  & 20.0 & \cellcolor{gray} & 25.0 & 0.0  \\ 
  boot& 0.0 & 0.0 & \cellcolor{gray} & 0.0 \\ 
  bootStud & 32.5 & 20.0 & 37.5& \cellcolor{gray} \\ 
    \bottomrule
\end{tabular}

\caption{Each cell shows the percentage of pairwise comparisons across all categories in the eight most populous territories where the inference procedure in a row produces wider 95\% simultaneous confidence sets for the ranks than the procedure in a column . The \textbf{top panel} shows results for the original set of categories in each territory. The \textbf{bottom panel} shows results when we group all categories except ``Liberal", ``Labor", ``Greens" and ``No party" into a single category ``Other".} 
\label{tab: TabCompareSimul}
\end{table}

\section{Simulations}
\label{sec: sim}

In this section, we examine the finite-sample performance of the following approaches to constructing confidence sets for ranks:

\begin{description}
	\item {\bf ``exactBonf'':} the confidence set computed through Algorithm~\ref{alg:marg sim CS} using the Bonferroni correction.
	\item {\bf ``exactHolm'':} the confidence set computed through Algorithm~\ref{alg:marg sim CS} using the Holm correction.
	\item {\bf ``CP'':} the confidence set based on Clopper-Pearson confidence sets for binomial probabilities as described in Remark~\ref{rem: CP}.
	\item {\bf ``boot'':} the bootstrap (not studentized) confidence set based on two-sided confidence sets for the differences as described in Remark~\ref{rem: stud}.
	\item {\bf ``bootStud'':} the bootstrap (studentized) confidence set based on two-sided confidence sets for the differences as described in Remark~\ref{rem: two-sided}.
	\item {\bf ``naive'':} the ``naive'' bootstrap confidence set as described in Remark~\ref{rem: naive bootstrap}.
\end{description}

All simulations are based on $1,000$ Monte Carlo samples and nominal coverage of $95\%$. Bootstrap confidence sets are based on $10,000$ bootstrap samples except in Section~\ref{sec: erratic}, where for computational reasons we use $1,000$ bootstrap samples. Coverage is defined as in Remark~\ref{rem: set of ranks}, i.e., coverage of the set of possible ranks $R_j$.

We consider three different designs, starting with one that is calibrated to the dataset on which the empirical application is based. Second, we investigate whether the confidence sets exhibit erratic behavior in coverage frequencies similar to that reported for confidence sets for binomial proportions (\cite{Brown:2001p5787}). The first two designs consider only data generating processes with three or seven categories to be ranked. In the final simulation design, we analyze the behavior of the confidence sets as the number of categories increases.

\subsection{AES Design}

The simulation design in this subsection is calibrated to the AES data for Greater Melbourne as in Section~\ref{sec: Melbourne}. The estimated vector of success probabilities is 
$$\hat{\theta}_{AES}=(0.372, 0.321, 0.179, 0.090, 0.026, 0.009, 0.004)$$
 and the number of respondents is $n_{AES}=234$. The vector of success probabilities employed in the simulations, $\theta$, is parametrized as
$$\theta = (1-\kappa)\frac{1}{p}\iota + \kappa \hat{\theta}_{AES}, $$
where $\iota$ denotes a vector of ones and $\kappa \in [0,1]$. So, when $\kappa=1$, then the vector of probabilities is the same as in the data set. When $\kappa=0$, then all probabilities are equal, and values of $\kappa$ between $0$ and $1$ generate probabilities between the two extremes. A parameter $\tau\in \{0.5,1,2\}$ is introduced to vary the sample size as $n = \tau\, n_{AES}. $ So, when $\tau=1$, then the sample size in the simulation is equal to the one in the data set, but we also consider half and double that sample size.

We begin by recalling the four main findings about 95\% marginal confidence sets for the ranks of categories in Greater Melbourne from Section~\ref{sec: Melbourne}:

\begin{enumerate}
\item naive bootstrap confidence sets are the tightest
\item exactHolm confidence sets are weakly tighter than confidence sets produced by all other valid procedures (i.e. all except the naive bootstrap)
\item boot and bootStud confidence sets are very wide in the middle of the ranking
\item bootStud confidence sets are very wide at the bottom of the ranking
\end{enumerate}

Below we explore each of these findings in depth by focusing on lengths and empirical coverage frequencies of confidence sets for the rank of the 1st (top of the ranking), 4th (middle of the ranking) and 7th (bottom of the ranking) categories in Table~\ref{tab: TabLengthCoverage}. 

First, naive bootstrap confidence sets are the tightest for all three categories in all simulations. Table~\ref{tab: TabLengthCoverage} shows that this tightness comes at the cost of severe under-coverage for the 1st and the 7th categories. For the 7th category naive bootstrap confidence sets under-cover in all but two simulations, and empirical coverage frequency may be as low as 59.5\%. Note that even for $\kappa = 1$ the success probabilities for the bottom categories are not well separated, which explains more pronounced under-coverage for the 7th category. In contrast, finite-sample methods (exactHolm and CP) cover the rank with the frequency no smaller than the desired level for all parametrizations and sample sizes. Both boot and bootStud confidence sets cover the 1st category with the frequency close to the nominal level, especially when success probabilities are equal ($\kappa=0$) and the sample size is small ($\tau = 0.5$). When the categories are better separated ($\kappa=1$) as in the dataset, then both finite-sample and bootstrap procedures cover the true rank with probability (close to) one.

Second, in contrast to our finding for Greater Melbourne, the finite-sample valid confidence sets are not uniformly tighter than the bootstrap confidences sets in simulations. In most parametrizations where boot and bootStud confidence sets are tighter, however, the difference in the size of the average confidence sets is below 0.1 and never exceeds 0.3, where the reference size of the confidence set covering the entire ranking is 6.0. 

Third, both CP and exactHolm methods produce much tighter confidence sets than both types of bootstrap for the 4th category when $\kappa = 1$, i.e., when success probabilities are equal to point estimates from Greater Melbourne respondents' sample.  Specifically, the average length of exactHolm confidence sets is less than one-third of the average length of boot and bootStud confidence sets when $\tau = 1$ and less than one-quarter of their average size when  $\tau = 2$. Furthermore, the average length of exactHolm confidence sets is more than 20\% lower than the average length of CP confidence sets in both instances. Despite the shorter average length, exactHolm confidence sets' empirical coverage frequency is above the desired level. Notice that both finite-sample valid confidence sets are of length zero in the majority of simulations with $\kappa = 1$ and $\tau = 2$ and contain only a single value 4. In contrast, most boot and bootStud confidence sets are of length one or above, meaning that they contain at least two values and are not as informative as finite-sample valid confidence sets.

\begin{table}[t]
\centering
\resizebox{\textwidth}{!}{ 
\begin{tabular}{>{\arraybackslash}p{3mm}>{\arraybackslash}p{3mm}
>{\arraybackslash}p{1mm}
>{\centering\arraybackslash}p{10mm}>{\centering\arraybackslash}p{10mm}>{\centering\arraybackslash}p{10mm}>{\centering\arraybackslash}p{10mm}>{\centering\arraybackslash}p{10mm}>{\centering\arraybackslash}p{10mm}
>{\arraybackslash}p{2mm}
>{\centering\arraybackslash}p{10mm}>{\centering\arraybackslash}p{10mm}>{\centering\arraybackslash}p{10mm}>{\centering\arraybackslash}p{10mm}>{\centering\arraybackslash}p{10mm}>{\centering\arraybackslash}p{10mm}
}
\toprule
&&&\multicolumn{6}{c}{\textbf{Length:}}&&\multicolumn{6}{c}{\textbf{Coverage Frequency:}}\\
\cmidrule(lr){4-9}\cmidrule(lr){11-16}
$\kappa$  & $\tau$ & & \makecell{exact \\Bonf}& \makecell{exact \\Holm}& CP & boot & \makecell{boot \\Stud} & naive &  & \makecell{exact \\Bonf} & \makecell{exact \\Holm} & CP & boot & \makecell{boot \\Stud}  & naive \\ 
 \midrule
  \multicolumn{14}{l}{\textbf{1st category:}}\\
\midrule
0 & 0.5 &  & 5.971 & 5.970 & 5.962 & 5.944 & 5.927 & 5.089 &  & 0.985 & 0.985 & 0.980 & 0.960 & 0.950 & 0.741 \\ 
   & 1 &  & 5.965 & 5.965 & 5.957 & 5.949 & 5.934 & 5.107 &  & 0.985 & 0.985 & 0.980 & 0.967 & 0.957 & 0.714 \\ 
   & 2 &  & 5.963 & 5.962 & 5.955 & 5.938 & 5.943 & 5.103 &  & 0.988 & 0.988 & 0.984 & 0.970 & 0.973 & 0.727 \\ 
  0.5 & 0.5 &  & 3.223 & 3.120 & 3.043 & 3.001 & 3.622 & 1.757 &  & 1.000 & 1.000 & 1.000 & 0.994 & 1.000 & 0.989 \\ 
   & 1 &  & 1.895 & 1.797 & 1.778 & 1.689 & 1.965 & 1.252 &  & 1.000 & 1.000 & 1.000 & 0.999 & 1.000 & 0.997 \\ 
   & 2 &  & 1.258 & 1.210 & 1.197 & 1.130 & 1.252 & 0.960 &  & 1.000 & 1.000 & 1.000 & 1.000 & 1.000 & 1.000 \\ 
  1 & 0.5 &  & 1.490 & 1.430 & 1.362 & 1.302 & 1.476 & 1.051 &  & 1.000 & 1.000 & 1.000 & 1.000 & 1.000 & 0.998 \\ 
   & 1 &  & 1.046 & 1.003 & 0.957 & 0.897 & 0.995 & 0.826 &  & 1.000 & 1.000 & 1.000 & 0.999 & 1.000 & 0.999 \\ 
   & 2 &  & 0.926 & 0.889 & 0.841 & 0.774 & 0.854 & 0.731 &  & 1.000 & 1.000 & 0.999 & 0.998 & 1.000 & 0.998 \\ 
    \midrule
  \multicolumn{14}{l}{\textbf{4th category:}}\\
\midrule
0 & 0.5 &  & 5.977 & 5.977 & 5.971 & 5.944 & 5.935 & 5.094 &  & 1.000 & 1.000 & 1.000 & 1.000 & 1.000 & 0.982 \\ 
   & 1 &  & 5.957 & 5.955 & 5.950 & 5.925 & 5.924 & 5.105 &  & 0.999 & 0.998 & 0.998 & 0.997 & 0.996 & 0.975 \\ 
   & 2 &  & 5.951 & 5.951 & 5.940 & 5.924 & 5.916 & 5.090 &  & 1.000 & 1.000 & 1.000 & 1.000 & 1.000 & 0.971 \\ 
  0.5 & 0.5 &  & 5.203 & 5.192 & 5.116 & 4.867 & 5.130 & 3.553 &  & 1.000 & 1.000 & 1.000 & 0.999 & 1.000 & 0.993 \\ 
   & 1 &  & 4.232 & 4.192 & 4.141 & 4.159 & 4.388 & 2.734 &  & 1.000 & 1.000 & 1.000 & 1.000 & 1.000 & 0.998 \\ 
   & 2 &  & 3.274 & 3.170 & 3.239 & 3.503 & 3.761 & 1.882 &  & 1.000 & 1.000 & 1.000 & 1.000 & 1.000 & 1.000 \\ 
  1 & 0.5 &  & 2.540 & 2.472 & 2.595 & 3.782 & 3.953 & 0.941 &  & 1.000 & 1.000 & 1.000 & 1.000 & 1.000 & 1.000 \\ 
   & 1 &  & 1.007 & 0.861 & 1.036 & 2.771 & 3.353 & 0.355 &  & 1.000 & 1.000 & 1.000 & 1.000 & 1.000 & 1.000 \\ 
   & 2 &  & 0.233 & 0.171 & 0.222 & 0.821 & 1.314 & 0.058 &  & 1.000 & 1.000 & 1.000 & 1.000 & 1.000 & 1.000 \\ 
    \midrule
  \multicolumn{14}{l}{\textbf{7th category:}}\\
\midrule
 0 & 0.5 &  & 5.968 & 5.966 & 5.959 & 5.914 & 5.910 & 5.079 &  & 0.985 & 0.985 & 0.983 & 0.984 & 0.982 & 0.672 \\ 
   & 1 &  & 5.970 & 5.968 & 5.961 & 5.946 & 5.937 & 5.112 &  & 0.985 & 0.985 & 0.981 & 0.985 & 0.984 & 0.672 \\ 
   & 2 &  & 5.967 & 5.967 & 5.959 & 5.951 & 5.948 & 5.185 &  & 0.985 & 0.985 & 0.980 & 0.986 & 0.985 & 0.728 \\ 
  0.5 & 0.5 &  & 4.315 & 4.288 & 4.267 & 3.994 & 4.253 & 3.036 &  & 0.995 & 0.995 & 0.995 & 1.000 & 1.000 & 0.908 \\ 
   & 1 &  & 3.393 & 3.345 & 3.366 & 3.354 & 3.550 & 2.537 &  & 0.996 & 0.994 & 0.996 & 1.000 & 1.000 & 0.943 \\ 
   & 2 &  & 2.802 & 2.748 & 2.799 & 2.865 & 3.018 & 2.214 &  & 0.998 & 0.998 & 0.999 & 1.000 & 1.000 & 0.964 \\ 
  1 & 0.5 &  & 2.331 & 2.331 & 2.393 & 2.777 & 5.097 & 1.220 &  & 1.000 & 1.000 & 1.000 & 1.000 & 1.000 & 0.595 \\ 
   & 1 &  & 1.892 & 1.879 & 1.930 & 2.247 & 3.926 & 1.237 &  & 1.000 & 1.000 & 1.000 & 1.000 & 1.000 & 0.856 \\ 
   & 2 &  & 1.525 & 1.492 & 1.630 & 2.003 & 2.277 & 1.034 &  & 1.000 & 1.000 & 1.000 & 1.000 & 1.000 & 0.986 \\ 
\bottomrule
\end{tabular}
}
\caption{ Average lengths and empirical coverage frequencies from 1000 Monte Carlo samples for the 95\% marginal confidence sets for the rank of the 1st (\textbf{top panel}), 4th (\textbf{middle panel}) and 7th (\textbf{bottom panel}) categories. } 
\label{tab: TabLengthCoverage}
\end{table}

Fourth, when the success probabilities are equal to point estimates from the respondents' sample of Greater Melbourne ($\kappa =1$), CP and exactHolm produce much tighter confidence sets for the 7th category than bootStud. For $\tau = 0.5$ and $\tau = 1$ the average length of CP and exactHolm confidence sets is more than two times smaller than the average length of bootStud confidence set. The difference becomes less pronounced with the increase in sample size, but the average length of the exactHolm confidence set is still more than 50\% smaller than the average length of bootStud confidence set for $\tau = 2$. Furthermore, the exactHolm confidence set is substantially shorter than the boot and CP confidence sets for all values of $\tau$. Tighter exactHolm confidence sets for the 7th category still provide empirical coverage frequency above the desired level.

Table~\ref{tab: TabLengthCoverage} also highlights common features of the inference procedures. For equal success probabilities ($\kappa=0$) the set of ranks $R_j$ is $[1,7]$ for all categories $j$, and the average length of all confidence sets barely decreases for larger sample sizes $\tau$. When success probabilities differ ($\kappa>0$) larger sample size $\tau$ means differences in success probabilities are easier to detect, and all confidence sets for the rank decrease in length. Similarly, the average length of all confidence sets except for bootStud for 7th category decreases as we increase $\kappa$, which means differences in success probabilities are larger and thus, again, easier to detect. For bootStud at the bottom of the ranking, the effect of better separation for $\kappa>0$ is mitigated by decreasing success probabilities leading to more frequent division by zero events in bootstrap test statistics and thus larger critical values. 

Finally, in addition to the five confidence sets we used in Section~\ref{sec: emp} we include exactBonf in all simulations. Table~\ref{tab: TabLengthCoverage} shows that, as expected, exactBonf confidence sets are uniformly wider than exactHolm confidence sets, but the difference in the average length is not substantial.

\subsection{Erratic Coverage}
\label{sec: erratic}

\cite{Brown:2001p5787} found that coverage frequencies of some confidence intervals for binomial proportions may vary in highly non-monotonic ways with the sample size and the success probability. Furthermore, they found that coverage frequencies may be far below the desired level, especially for small sample sizes and/or small success probabilities. Motivated by this ``erratic'' behavior of coverage in the binomial case, in this subsection, we compare our confidence set (exactBonf), which is valid in finite samples, with the bootstrap confidence sets (boot and bootStud), which are justified by asymptotic validity. In particular, we are interested in their coverage properties in small samples and/or scenarios with small success probabilities. To this end we consider three categories and set the vector of success probabilities as $\theta = (\pi, \pi, 1-2\pi)$, where $\pi$ is varied between $1/100$ and $1/3$. The sample size is varied between $10$ and $100$.

For the different values of $\pi$, Figure~\ref{fig: coverage erratic diff} shows the frequencies of the confidence set $C_{{\rm symm},n}(1-\alpha,I)$ simultaneously covering all differences involving the first category, $\Delta_I$, where $I=J^{\rm two-sided}$ with $J_0=\{1\}$. The coverage frequencies are plotted as functions of the sample size $n$ for the bootstrap with (panel~(a)) and without (panel~(b)) studentization. Both bootstrap approaches lead to simultaneous confidence sets for the differences that considerably under-cover for small sample sizes and or small probabilities $\pi$, analogously to the findings in \cite{Brown:2001p5787}. The coverage probability for $\pi=1/100$ can be even lower than $0.4$ when sample sizes are small ($n\leq 20$).

Figure~\ref{fig: coverage erratic rank} shows the coverage frequencies of the resulting confidence sets for the rank based on the two bootstrap approaches (panels~(a) and (b)) and, for comparison, also the coverage frequencies of the finite sample method (panel~(c)). Perhaps somewhat surprisingly, the coverage frequencies for the bootstrap methods are above $1-\alpha$ in all scenarios, even when success probabilities and/or sample sizes are small. Hence, under-coverage of the differences observed in Figure~\ref{fig: coverage erratic diff} does not lead to under-coverage of the rank. The reason for this phenomenon is that correct coverage of the rank only requires that the confidence sets for the differences do not lead to incorrect claims about the signs of the differences. While not necessarily covering the true values of the differences, the bootstrap confidence sets for the differences lead to the correct determination of their signs and thus to coverage of the rank. As expected, the finite-sample method covers the rank with the desired probability even for small samples sizes and/or small success probabilities.

\begin{figure}[!t]
  \centering
  \includegraphics[width=\textwidth]{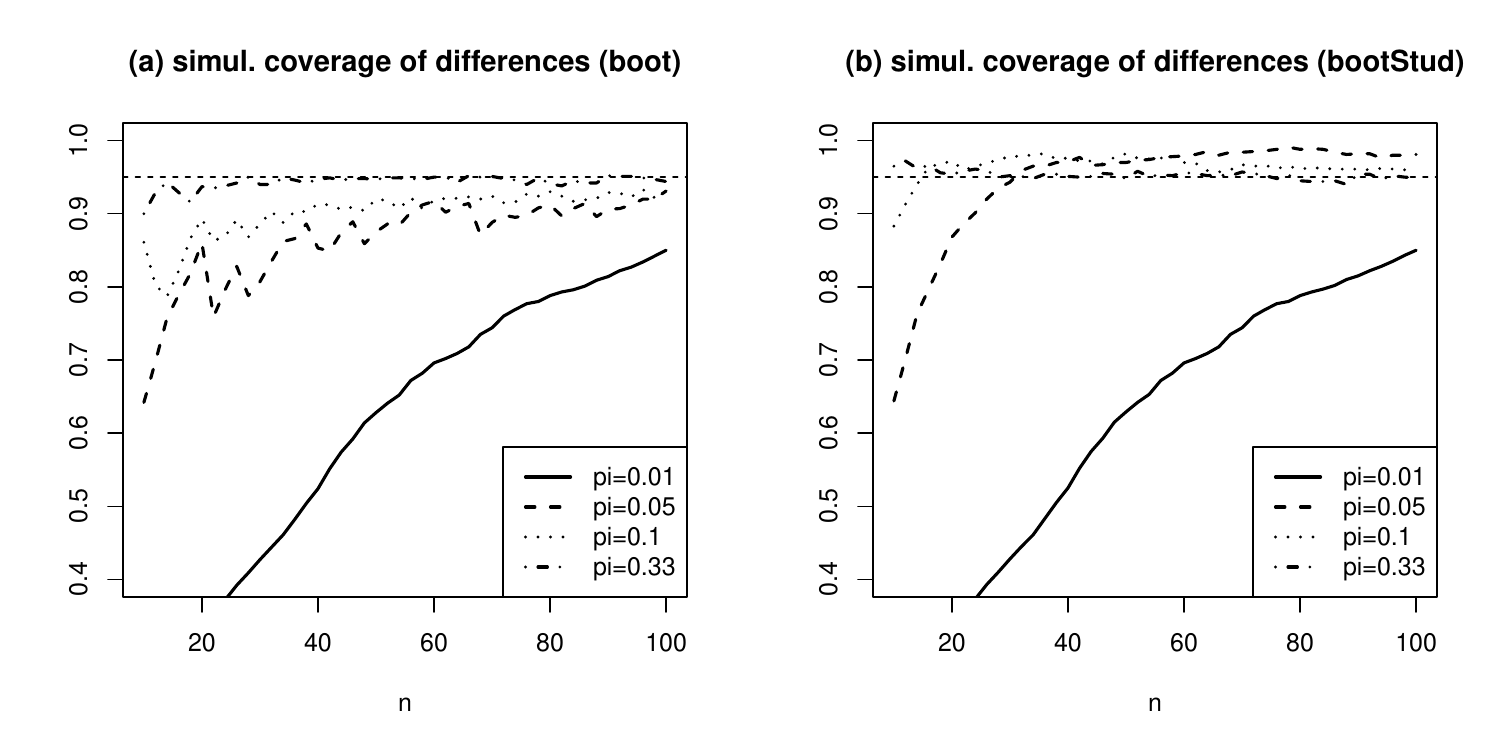}
  \caption{Frequencies of simultaneously covering all differences involving the first category, i.e., the frequency of $\Delta_I\in C_{{\rm symm},n}(1-\alpha,I)$. The horizontal dashed line marks the desired coverage level $1-\alpha$.} \label{fig: coverage erratic diff}
\end{figure}

\begin{figure}[!t]
  \centering
  \includegraphics[width=\textwidth]{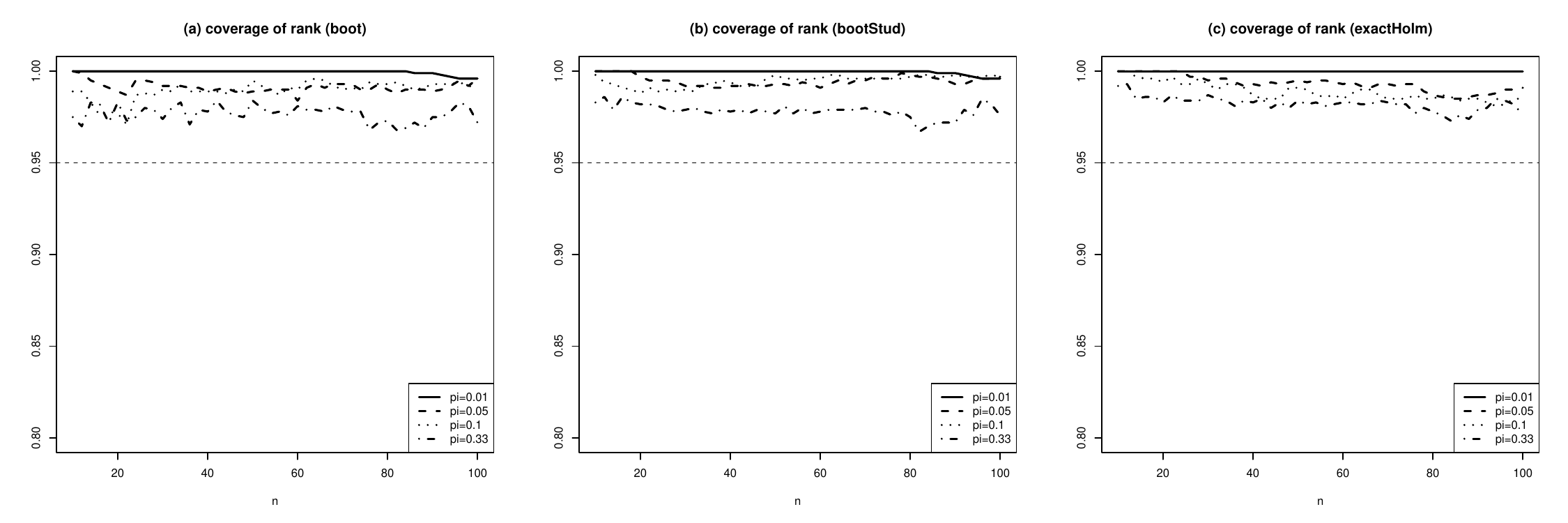}
  \caption{Coverage frequencies for the rank of the first category. The horizontal dashed line marks the desired coverage level $1-\alpha$.} \label{fig: coverage erratic rank}
\end{figure}

\subsection{Larger Number of Categories}
\label{sec: sim largep}

In this subsection, we consider a simulation design in which the success probabilities are all equal, i.e. $\theta_j=1/p$ for all $j=1,\ldots,p$, and we increase $p$ from $5$ to $50$. For the different values of $p$, Figures~\ref{fig: coverage plarge} and \ref{fig: length plarge} show the coverage frequencies and lengths of the different confidence sets as functions of the sample size $n$.

First of all, the finite-sample methods exactBonf, exactHolm, and CP cover the rank with coverage frequencies close to one in all scenarios. As expected the naive bootstrap fails to cover the true rank with the desired probability. Its coverage may be considerably below $0.4$ and even approaches zero when there are many categories. Interestingly, however, the bootstrap method based on the studentized statistic also considerably under-covers when there are (moderately) many categories. For instance, with $p=20$ categories, its coverage frequency can be well below $0.8$ for small sample sizes. For more categories ($p=50$), the under-coverage occurs up to larger sample sizes.

In terms of length, the exactBonf, exactHolm, CP, and boot perform similarly. As expected bootStud and naive lead to shorter confidence sets in those scearios in which they under-cover.

\begin{figure}[!t]
  \centering
  \includegraphics[width=0.9\textwidth]{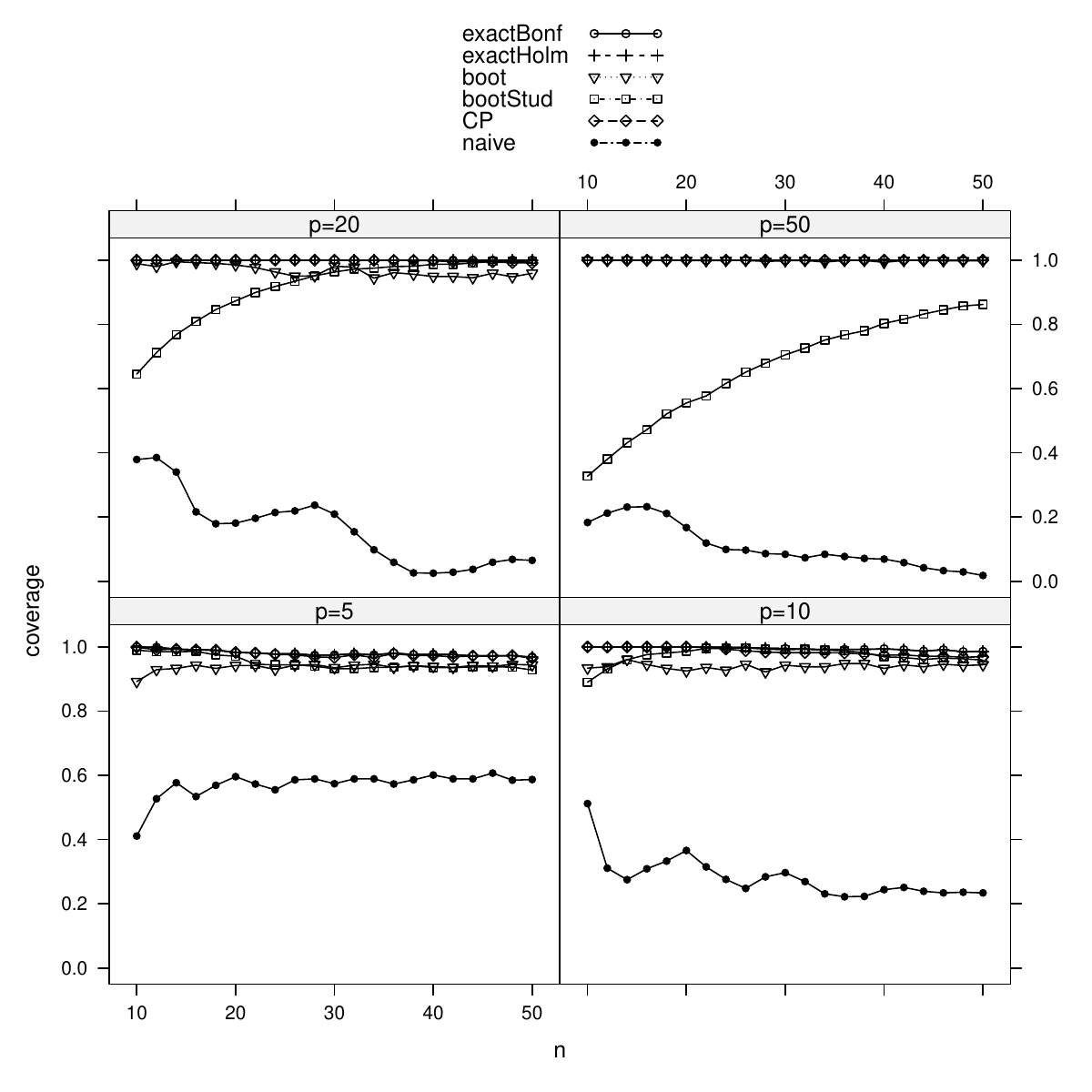}
  \caption{Coverage frequencies of confidence sets for different $n$ and $p$.} \label{fig: coverage plarge}
\end{figure}

\begin{figure}[!t]
  \centering
  \includegraphics[width=0.9\textwidth]{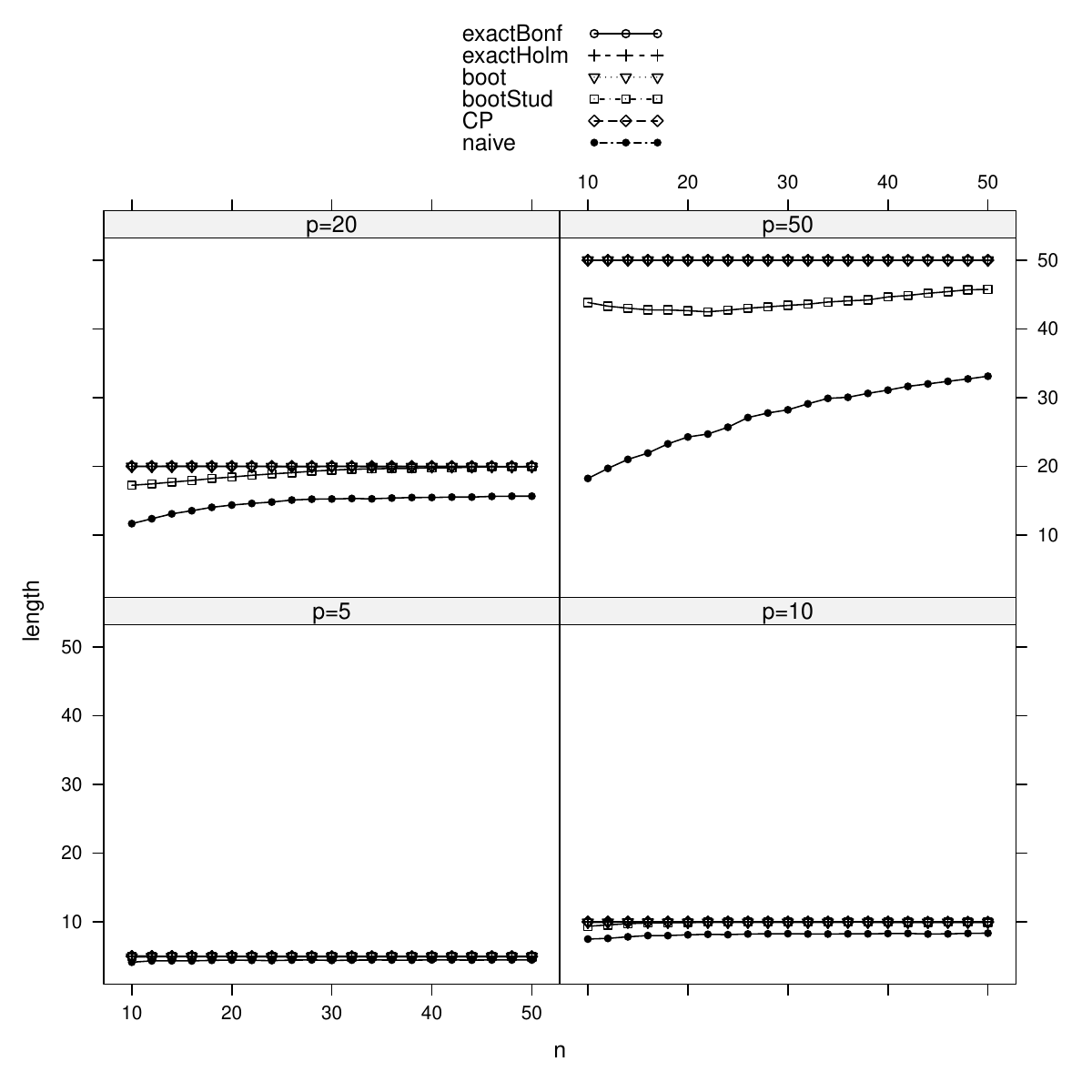}
  \caption{Length of confidence sets for different $n$ and $p$.} \label{fig: length plarge}
\end{figure}

\clearpage
\newpage
\appendix

\section{Asymptotic Validity of the Bootstrap}
\label{app: bootstrap}

For the arguments in this section, we slightly change notation by indexing population quantities by $P$, the underlying probability mechanism that specifies the multinomial sampling probabilities. For instance, let $\theta\equiv \theta(P)\equiv (\theta_1(P),\ldots,\theta_p(P))'$ denote the probabilities of a particular $P$ and similarly
\begin{equation*}
	\Delta_I(P) \equiv (\Delta_{j,k}(P) : (j,k) \in I)~,
\end{equation*}
where $$\Delta_{j,k}(P) \equiv \theta_j (P) - \theta_k (P)$$ and $I \subseteq J^2$. As in the main text let
\begin{equation}\label{equation:hatsigmajk}
\hat \sigma^2_{j,k} \equiv  \hat \theta_j ( 1-  \hat \theta_j ) + \hat \theta_k (1-  \hat \theta_k ) + 2 \hat \theta_j \hat \theta_k
\end{equation}
define the cumulative distribution functions 
\begin{eqnarray}
	L_{{\rm lower}, n}(x,I, P) &\equiv& P\left \{ \max_{(j,k) \in I} \frac{\hat{\theta}_j - \hat{\theta}_k - \Delta_{j,k}(P)}{\hat{\sigma}_{j,k} / \sqrt{n} } \leq x \right \} ~, \label{eq:Lnlower} \\
	L_{{\rm upper}, n}(x,I, P) &\equiv& P\left \{ \max_{(j,k) \in I} \frac{\Delta_{j,k}(P) - (\hat{\theta}_j - \hat{\theta}_k)}{\hat{\sigma}_{j,k}  / \sqrt{n} } \leq x \right \} ~, \label{eq:Lnupper} \\
	L_{{\rm symm}, n}(x,I, P) &\equiv& P\left \{ \max_{(j,k) \in I} \frac{|\hat{\theta}_j - \hat{\theta}_k - \Delta_{j,k}(P)|}{\hat{\sigma}_{j,k} / \sqrt{n} } \leq x \right \} ~. \label{eq:Lnsymm}
\end{eqnarray}
Let $\hat P_n$ be an estimate of $P$, where $\hat P_n$  specifies the empirical frequencies $\hat \theta = ( \hat \theta_ 1 , \ldots , \hat \theta_p )~.$ Then the bootstrap quantiles can be written as
$$c_{l,n}(1-\alpha,I) = L_{l, n}^{-1}(1-\alpha,I, \hat{P}_n) $$
for $l\in \{\rm lower,upper,symm \}$. Here, it is understood that, for a cumulative distribution function $F(x)$ on the real line, the quantity $F^{-1}(1  - \alpha)$ is defined to be $\inf\{x \in \mathbf R: F(x) \geq 1 - \alpha\}$. The bootstrap simply replaces the unknown frequencies $\theta$ with its empirical counterpart $\hat \theta$, i.e. $\hat{\theta} = \theta(\hat{P}_n)$.

Consider the rectangular confidence set for the vector of differences $\Delta_I(P)$ defined by
\begin{equation*}
	C_{l, n}(1 - \alpha,I) \equiv \prod_{(j,k) \in I} C_{l, n,j,k}(1 - \alpha,I)
\end{equation*}
where $C_{l, n,j,k}(1 - \alpha,I)$ could be defined in various ways:
\begin{align}
	C_{{\rm lower}, n,j,k}(1 - \alpha,I) &\equiv \Biggr [ \hat{\theta}_j - \hat{\theta}_k - c_{{\rm lower}, n}(1- \alpha,I) \frac{ \hat{\sigma}_{j,k}}{\sqrt{n}}, \infty \Biggr ). \label{eq:Cnlower}\\
	C_{{\rm upper}, n,j,k}(1 - \alpha, I) &\equiv \Biggr (-\infty, \hat{\theta}_j - \hat{\theta}_k + c_{{\rm upper}, n}(1- \alpha,I) \frac{ \hat{\sigma}_{j,k}}{\sqrt{n}} \Biggr ], \label{eq:Cnupper}\\
	C_{{\rm symm}, n,j,k}(1 - \alpha, I) &\equiv \Biggr [\hat{\theta}_j - \hat{\theta}_k \pm c_{{\rm symm}, n}(1- \alpha,I) \frac{ \hat{\sigma}_{j,k}}{\sqrt{n}} \Biggr ], \label{eq:Cnsymm} \\
	C_{{\rm equi}, n,j,k}(1 - \alpha, I) &\equiv C_{{\rm lower}, n}\left (1 - \frac{\alpha}{2}, I \right ) \bigcap C_{{\rm upper}, n}\left (1 - \frac{\alpha}{2}, I \right ).\label{eq:Cnequi}
\end{align}
The following lemma shows that these bootstrap confidence sets are asymptotically valid in the sense that they cover the true vector of differences with probability approaching $1-\alpha$:

\begin{lemma}\label{lem: bootstrap validity for CS for differences}
	For any $l\in \{\rm lower,upper,symm \}$ and any $I\subset J^2$,
	$$\lim_{n\to\infty} P\left\{\Delta_I(P) \in C_{l,n}(1-\alpha,I)\right\} \geq 1-\alpha$$ with equality if $\theta_j(P) > 0$ for all $j \in J$.
\end{lemma}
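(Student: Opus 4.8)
The plan is to establish a bootstrap central limit theorem for the multinomial case and then argue that the relevant maxima of (studentized) coordinate-wise statistics converge in distribution, after which consistency of the bootstrap quantiles follows by standard arguments. First I would recall that $\sqrt{n}(\hat\theta - \theta(P))$ converges in distribution to a mean-zero Gaussian vector $Z$ with the multinomial covariance matrix $\Sigma(P)$ whose entries are $\theta_j(1-\theta_j)$ on the diagonal and $-\theta_j\theta_k$ off-diagonal; hence $\sqrt{n}(\hat\theta_j - \hat\theta_k - \Delta_{j,k}(P)) \Rightarrow Z_j - Z_k$ jointly over $(j,k)\in I$, with $\mathrm{Var}(Z_j - Z_k) = \sigma_{j,k}^2(P) \equiv \theta_j(1-\theta_j) + \theta_k(1-\theta_k) + 2\theta_j\theta_k$. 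Since $\hat\sigma_{j,k}^2 \to \sigma_{j,k}^2(P)$ in probability by the continuous mapping theorem, when all $\theta_j(P)>0$ each $\sigma_{j,k}^2(P)>0$ and the studentized vector converges jointly to $\big((Z_j - Z_k)/\sigma_{j,k}(P) : (j,k)\in I\big)$. Applying the continuous mapping theorem to the $\max$ functional (and to $|\cdot|$ for the symmetric case) gives that $L_{l,n}(\cdot, I, P)$ converges weakly to the law of the corresponding maximum, call it $L_l(\cdot, I, P)$.

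Next I would show the bootstrap consistently estimates this limit law. The bootstrap draws from the multinomial with parameters $n$ and $\hat\theta = \theta(\hat P_n)$; conditionally on the data, $\sqrt{n}(\hat\theta^* - \hat\theta)$ converges in distribution (in probability) to the same Gaussian limit $Z$ because $\Sigma(\hat P_n)\to\Sigma(P)$ in probability and the multinomial CLT holds uniformly enough along the sampling path (one can invoke a triangular-array/Lindeberg argument, or simply continuity of the Gaussian limit in $\theta$). Likewise $\hat\sigma_{j,k}^{*2}\to\sigma_{j,k}^2(P)$ conditionally in probability, so $L_{l,n}(\cdot, I, \hat P_n)\Rightarrow L_l(\cdot, I, P)$ in probability. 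If the limiting distribution function $L_l(\cdot, I, P)$ is continuous and strictly increasing at its $(1-\alpha)$-quantile — which holds when all $\sigma_{j,k}^2(P)>0$, since then the limit is the distribution of a max of a nondegenerate Gaussian-type vector and is absolutely continuous on the relevant range — then $c_{l,n}(1-\alpha, I) = L_{l,n}^{-1}(1-\alpha, I, \hat P_n) \to L_l^{-1}(1-\alpha, I, P)$ in probability. Combining this with the weak convergence of the non-bootstrap statistic and Slutsky's theorem yields $P\{\Delta_I(P)\in C_{l,n}(1-\alpha, I)\}\to 1-\alpha$, since the event is exactly $\{T_{l,n}(I) \le c_{l,n}(1-\alpha, I)\}$ for the appropriate non-studentized-or-studentized statistic $T_{l,n}$.

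For the boundary case where some $\theta_j(P)=0$ I would argue the inequality (coverage $\ge 1-\alpha$) rather than equality. Here the relevant subtlety is that $\sigma_{j,k}^2(P)$ may be zero (when both $\theta_j=\theta_k=0$, which in fact forces $\Delta_{j,k}(P)=0$) or the limit may be degenerate. I would partition the index set $I$ into pairs with $\sigma_{j,k}^2(P)>0$ and pairs with $\sigma_{j,k}^2(P)=0$; for the former the analysis above goes through, while for the latter the corresponding coordinate of $\hat\theta_j-\hat\theta_k$ concentrates at $0 = \Delta_{j,k}(P)$ fast enough (indeed $\hat\theta_j=\hat\theta_k=0$ with probability tending to one, or at worst $\hat\theta_j-\hat\theta_k = O_p(1/n)$) that, using the $0/0=0$ convention, the degenerate coordinates do not increase the limiting max and the event $\{\Delta_{j,k}(P)\in C_{l,n,j,k}\}$ holds with probability approaching one on those coordinates. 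Because dropping coordinates can only enlarge the quantile of the max and because the degenerate coordinates are covered w.p.\ $\to 1$, the overall coverage is at least the limiting coverage of the nondegenerate sub-problem, which is $\ge 1-\alpha$.

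The main obstacle I anticipate is the careful handling of the denominators $\hat\sigma_{j,k}$ (and $\hat\sigma^*_{j,k}$) when some success probabilities are at or near zero: one must justify that the $0/0=0$ and $c/0=\mathrm{sign}(c)\infty$ conventions described in Footnote~\ref{foot: zeros} do not sabotage either the weak convergence of the max or the conditional weak convergence of the bootstrap statistic, and in particular that the critical value, though possibly $+\infty$ on some samples, is still $\ge$ the true statistic with the right limiting probability. A clean way to manage this is to note that coordinates with $\sigma_{j,k}^2(P)>0$ behave regularly, while coordinates with $\sigma_{j,k}^2(P)=0$ are handled by the concentration argument above, so the degeneracies only ever help coverage; this is exactly why the conclusion is an inequality in general and an equality precisely when all $\theta_j(P)>0$.
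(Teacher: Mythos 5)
Your proposal is correct and follows essentially the same route as the paper's proof: the multivariate CLT for the multinomial together with a triangular-array/Lindeberg argument for bootstrap consistency, the continuous mapping theorem and Slutsky for the studentized differences and their maximum, continuity and strict monotonicity of the limiting max distribution to get convergence of the bootstrap quantiles, and a separate treatment of the degenerate pairs (both probabilities zero) via the $0/0=0$ convention combined with the fact that the critical value computed over all of $I$ dominates the one computed over the nondegenerate subset. One small wording slip: ``dropping coordinates can only enlarge the quantile of the max'' has the direction reversed --- enlarging the index set enlarges the max and hence its quantile --- but the conclusion you draw from it is the correct one and is exactly the inequality $L^{-1}_{{\rm lower},n}(1-\alpha,I,\hat P_n)\geq L^{-1}_{{\rm lower},n}(1-\alpha,I^*,\hat P_n)$ used in the paper.
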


\begin{proof}
	We begin by considering the case where $\theta_j(P) > 0$ for all $j \in J$.  First, consider the joint behavior of $( \sqrt{n} ( \hat \theta_{j} - \hat \theta_k )/ \hat \sigma_{j,k} )$ for
	all ${p \choose 2}$ distinct pairs $(j,k)$ with $j \ne k$.  Toward this end,  let  
	$J_n ( P)$ denote the joint distribution of  $\sqrt{n} ( \hat \theta_1 - \theta_ 1(P) , \ldots , \hat \theta_p - \theta_p(P) )$.
	By the multivariate Central Limit Theorem, $J_n ( P)$ converges in distribution to $J(P)$,  the multivariate normal distribution with mean 0 and covariance matrix $\Sigma = \Sigma (P)$, where $\Sigma$ has $(j,k)$ entry
	$\theta_j(P) ( 1- \theta_j(P) )$ if $j = k$ and $ - \theta_j(P) \theta_k(P)$ if $j \ne k$.
	Moreover,  in a triangular array setup, if $P_n$ is a sequence of multinomial probabilities with 
	$\theta (P_n ) \to \theta (P)$, then $J_n ( P_n )$ converges in distribution  to $J(P)$. To see why,
	apply the Cram\'er-Wold device and the Lindeberg CLT.
	Since, $\theta ( \hat P_n ) \to \theta (P)$ almost surely (componentwise, by the Strong Law of Large Numbers), it follows that 
	$$\rho \left  ( J_n ( \hat P_n ) , J_n ( P) \right ) \to 0~~~{\rm almost~surely}~,$$
	where $\rho$ is any metric metrizing weak convergence in ${\bf R}^p$.
	Next, let $J_n' (P)$ denote the joint  distribution of $$\sqrt{n} [  ( \hat \theta_j - \hat{\theta}_k ) - (\theta_j(P) - \theta_k(P) ) ]$$ for all $j < k$.
	So $J'_n (P)$ is a distribution on ${\bf R}^{p'}$, where $p' = {p \choose 2}$.  (The pairs can be
	ordered in any fashion, but  for the sake of argument, they are ordered as $(1,2), \ldots (1, p)$ followed by $(2,3) , \ldots , (2,p)$, etc.)
	By the Continuous Mapping Theorem,  $J'_n (P)$ converges in distribution to $J' (P)$,  the multivariate normal distribution with mean 0 and covariance matrix $\Sigma'  = \Sigma' (P)$.  Note $\Sigma'$ can easily be obtained from $\Sigma$, but its exact form is not actually required.  Again, this convergence is
	locally uniform in the sense that $J'_n (P_n )$ converges in distribution to $J' (P)$ whenever $\theta ( P_n ) \to \theta (P)$.  Since $\theta ( \hat P_n ) \to \theta (P)$ almost surely, we have
	$$\rho' \left  ( J'_n ( \hat P_n ) , J'_n ( P) \right ) \to 0~~~{\rm almost~surely}~,$$
	where $\rho'$ metrizes weak convergence on ${\bf R}^{p'}$.
	Finally, we can consider the joint distribution of studentized differences; to this end,
	let $J_n^* (P)$ denote the joint distribution of  the $p'$ variables
	 \begin{equation}\label{equation:joint*}
	  \frac { \sqrt{n} [  ( \hat \theta_j -  \hat \theta_k ) - (\theta_j(P) - \theta_k(P) ) ]}{ \hat \sigma_{j,k} }~,
	  \end{equation}
	 where $\hat \sigma^2_{j,k}$ is given in (\ref{equation:hatsigmajk}).
	Under $P$, $\theta ( \hat P_n ) \to \theta (P)$ almost surely, and so $\hat \sigma^2_{j,k}$ converges almost surely to $\sigma^2_{j,k} (P)$ given by
	$$\sigma_{j,k}^2  (P) = \theta_j(P) (1- \theta_j(P) ) + \theta_k(P) ( 1- \theta_k(P) ) +2 \theta_j(P) \theta_k(P)~.$$
	by a multivariate Slutsky Theorem (or the Continuous Mapping Theorem), $J_n^* ( P)$ converges in distribution to $J^* (P)$, the multivariate normal distribution with mean 0 and covariance matrix $\Sigma^* =\Sigma^* (P)$,
	where $\Sigma^*$ is easily obtained from $\Sigma'$ (as $\Sigma^*$ is the correlation matrix corresponding to the covariance matrix $\Sigma'$). 
	Under $P_n$ with $\theta (P_n ) \to \theta (P)$, it also follows that $\hat \sigma_{j,k}$ converges almost surely to $ \sigma_{j,k}(P)$.  To see why, first show $\hat \theta_j$ converges to $\theta_j$ with probability one under $P_n$;  since $\hat \theta_j$ can be viewed an average of bounded i.i.d.\ variables, this convergence follows easily by the well-known 4th moment argument and the Borel-Cantelli Lemma.  Hence, under $P_n$, we also have $J^* ( P_n ) $ converges in distribution to $J^* (P)$, and then for the same reasons as for $J_n$ and $J'_n$, we also have
	$$\rho' \left  ( J^*_n ( \hat P_n ) , J^*_n ( P) \right ) \to 0~~~{\rm almost~surely}~$$
	and
	$$\rho' \left  ( J^*_n ( \hat P_n ) , J^* ( P) \right ) \to 0~~~{\rm almost~surely}.~$$
	All of these results carry over if we consider a subset $I \subset J^2$, by the Continuous Mapping Theorem.  For example, if $J_n^* (I, P)$ refers to the joint distribution of the variables (\ref{equation:joint*}), but only for $(j,k) \in I$, then it follows that
	$$\rho' \left  ( J^*_n ( I,  \hat P_n ) , J^*_n (I,  P) \right ) \to 0~~~{\rm almost~surely}~.$$

	Bootstrap consistency of the distributions \eqref{eq:Lnlower}--\eqref{eq:Lnsymm} now follows
	by the Continuous Mapping Theorem.  Indeed, for any $P_n$ with $\theta ( P_n ) \to \theta (P)$,
	$$L_{{\rm lower},n} (x, I, P_n ) \to L_{{\rm lower},n} (x, I, P )~~{\rm for~all~}x,$$
	where $L_{{\rm lower}, n} ( \cdot , I, P ) )$ is the distribution of $\max_{ (j,k) \in I } Z_{j,k}$
	and  $Z = ( Z_{j,k} : (j,k) \in I  ) $ is multivariate normal  with distribution $J^* (I, P)$.  Note that this distribution
	is continuous everywhere and strictly increasing.  
	Hence, since $\theta ( \hat P_n ) \to \theta (P)$ almost surely, we also have, for all $x$,
	$$L_{{\rm lower},n} (x, I,  \hat P_n ) \to L_{{\rm lower},n} (x, I, P )~~{\rm almost~surely}~.$$
	It also follows that bootstrap quantiles are consistent in the sense that
	$$L_{{\rm lower},n}^{-1} (1- \alpha , I, \hat P_n ) \to L_{{\rm lower},n}^{-1} (1- \alpha , I, P )~~{\rm almost~surely}~.$$
	By Slutsky,
	$$P\left \{ \max_{(j,k) \in I} \frac{\hat{\theta}_j - \hat{\theta}_k - \Delta_{j,k}(P)}{\hat{\sigma}_{j,k} / \sqrt{n} } \leq  L_{{\rm lower},n}^{-1} (1- \alpha , I, \hat P_n )   \right \}  \to 1- \alpha~.$$
	By ``inverting" this probability statement,  we can conclude the intervals 
	$$ \left [ \hat \theta_j - \hat \theta_k - \frac{ \hat \sigma_{j,k} }{\sqrt{n}}  L_{{\rm lower},n}^{-1} (1- \alpha , I, \hat P_n ) ,~ \infty \right )$$
	jointly cover the true $\theta_j(P) - \theta_k(P)$ with asymptotic probability $1- \alpha$.
	The arguments for upper and two-sided confidence bounds are analogous.

	The above argument maintained the assumption that $\theta_j(P) > 0$ for all $j \in J$.  We now consider the case where that need not be true.  To this end, first suppose that $\theta_j(P) = \theta_k(P) = 0$ for all $(j,k) \in I$.  In this case, $\hat \theta_j = \hat \theta_k = \hat \sigma_{j,k} = 0$ for all $(j,k) \in I$.  But then, bootstrap samples from $\hat P_n$ also satisfy $\hat \theta_j^*  = \hat \theta_k^*  = \hat \sigma_{j,k}^* = 0$ for all $(j,k) \in I$. Hence, our convention that $0/0 = 0$ and $c/0 = \text{sign}(c)\infty$ for $c \neq 0$, implies that $L^{-1}_{{\rm lower}, n}(1 - \alpha, I, \hat P_n) = 0$ w.p.1.  It follows that $$P\{ \Delta_I(P) \in C_{{\rm lower}, n}(1 - \alpha,I) \} = 1~.$$  Now suppose that $\theta_j(P) > 0$ or $\theta_k(P) > 0$ for some $(j,k) \in I$.  The same argument implies that $$P\{ \theta_j(P) - \theta_k(P) \in C_{{\rm lower}, n,j,k}(1 - \alpha,I) \} = 1$$ for any $(j,k) \in I$ with $\theta_j(P) = \theta_k(P) = 0$.  To complete the proof, it suffices to show that 
	\begin{equation} \label{eq:enough}
	\liminf_{n \rightarrow \infty} P\{\theta_j(P) - \theta_k(P) \in C_{{\rm lower}, n,j,k}(1 - \alpha,I) \text{ for all } (j,k) \in I^*\} \geq 1 - \alpha,
	\end{equation}
	where $$I^* = \{(j,k) \in I : \theta_j(P) > 0 \text{ or } \theta_k(P) > 0 \}.$$  Since $$L^{-1}_{{\rm lower}, n}(1 - \alpha, I, \hat P_n) \geq L^{-1}_{{\rm lower}, n}(1 - \alpha, I^*, \hat P_n),$$ the desired convergence in \eqref{eq:enough} can be established simply by arguing as in the first part of the theorem.
	
\end{proof}

\section{Proofs of the Main Results}

\begin{proof}[Proof of Theorem~\ref{thm: coverage}]
	We prove the theorem for the two-sided case ($I=J^{\rm two-sided}$), but the derivations are very similar for the one-sided cases. Define
	\begin{align*}
		S^- &\equiv \bigcup_{j\in J_0} S_j^- \qquad \text{with} \qquad S_j^-\equiv \left\{(j,k)\in I\colon j\neq k \text{ and } \theta_j\leq \theta_k \right\}\\
		S^+ &\equiv \bigcup_{j\in J_0} S_j^+ \qquad \text{with} \qquad S_j^+\equiv \left\{(j,k)\in I\colon j\neq k \text{ and } \theta_j\geq \theta_k \right\}
	\end{align*}
	and
	\begin{align*}
		R^{-} &\equiv \bigcup_{j\in J_0} R^{-}_j \qquad \text{with} \qquad R^{-}_j\equiv \left\{(j,k)\in I\colon j\neq k, \text{ reject } H_{k,j}, \text{ and claim } \theta_j<\theta_k \right\}\\
		R^{+} &\equiv \bigcup_{j\in J_0} R^{+}_j \qquad \text{with} \qquad R^{+}_j\equiv \left\{(j,k)\in I\colon j\neq k, \text{ reject } H_{j,k}, \text{ and claim } \theta_j>\theta_k \right\}
	\end{align*}
	Suppose $S^-\cap R^{+}=\emptyset$ and $S^+\cap R^{-}=\emptyset$. Then:
	\begin{align*}
		&\forall j \in J_0\colon S_j^-\cap R^{+}_j=\emptyset \text{ and }S_j^+\cap R^{-}_j=\emptyset\\
		\Rightarrow\qquad &\forall j \in J_0\colon \theta_j>\theta_k \; \forall (j,k)\in R^{+}_j \text{ and } \theta_j<\theta_k\; \forall (j,k)\in R^{-}_j\\
		\Rightarrow\qquad &\forall j \in J_0\colon \theta_j>\theta_k \; \forall k\in \text{Rej}^{+}_j \text{ and } \theta_j<\theta_k\; \forall k\in \text{Rej}^{-}_j\\
		\Rightarrow\qquad &\forall j \in J_0\colon r_j \leq p-|\text{Rej}^{+}_j| \text{ and } r_j \geq 1+|\text{Rej}^{-}_j|
	\end{align*}
	The third implication uses the fact that the number of pairs $(j,k)$ in $R^{+}_j$ (or $R^{-}_j$) is equal to the number of of $k$ in $\text{Rej}^{+}_j$ (or $\text{Rej}^{-}_j$). Therefore,
	$$P\left\{r_{j} \in R_{n,j}\;\forall j\in J_0\right\} \geq P\left\{ S^-\cap R^{+}=\emptyset \text{ and } S^+\cap R^{-}=\emptyset\right\} = 1- FWER_I $$
	and the desired result follows from \eqref{eq: FWER control}.
\end{proof}

\begin{proof}[Proof of Theorem~\ref{thm: size control}]
	We first show that the distribution of $X_j$ given $S_{j,k}=s$ is binomial based on $s$ trials and success probability $\theta_j/(\theta_j+\theta_k)$. To see this note that
	$$P \{ X_j = x | S_{j,k} = s \}  \propto P  \{ X_j = x \text{ and } X_k = s-x \} = P \{ X_j = x  \} P \{ X_k = s-x | X_j = x \},$$
	where the symbol $\propto$ means there is a constant out in front that can depend on $s$ and $\theta_j , \theta_k$ (which we'll see depends on $\theta_j$ and $\theta_k$ through $\theta_j  / ( \theta_j + \theta_k )$). Continuing,
	$$P \{ X_j = x | S_{j,k} = s \}  \propto  {n \choose x} \theta_j^x ( 1 - \theta_j )^{n-x}  { {n-x} \choose {s-x}}
	\left ( \frac{\theta_k}{ 1- \theta_k} \right )^{s-x} \left (
	 \frac{ 1- \theta_j - \theta_k}{1 - \theta_j} \right )^{n-s}$$
	$$ \propto  { s \choose x } (\theta_j / \theta_k )^x \propto {s \choose x } \left ( \frac{\theta_j}{\theta_j + \theta_k} \right )^x \left ( \frac{\theta_k}{\theta_j + \theta_k} \right )^{s-x}~,
	$$
	which is the binomial family of distributions with univariate parameter $\theta_{j,k} = \theta_j / ( \theta_j + \theta_k )$.
	Clearly, a test of $H_{j,k}$ is equivalent to a the hypothesis specifying $\theta_{j,k} \le 1/2$.
	As is well known, this family of distributions has monotone likelihood ratio.  Therefore,
	by Corollary~3.4.1 of \cite{Lehmann:2022aa}  and conditional on $S_{j,k}$, there exists a UMP level $\beta_{j,k}$ test for testing $H_{j,k}$ given by  \eqref{eq: def of test} with constants $\gamma(s)$ and $C(s)$ determined by
	$$E_{\theta_k} [\phi(X_j,S_{j,k})| S_{j,k}=s] = \beta_{j,k}\quad \forall s. $$
	Here, $E_{\theta_k}[\cdot]$ refers to the expectation under which $\theta_j=\theta_k$. It is easy to see that the equation determining the constants can be written as in \eqref{eq: def of constants}, which shows that the test has exact rejection probability (conditional on $S_{j,k}$) equal to $\beta_{j,k}$.
	 By monotone likelihood ratio (and still conditional
	on $S_{j,k}$), the (conditional) power function of the test is nondecreasing, and so the conditional rejection probability
	is bounded above by $\beta_{j,k}$ for all $\theta_j$ and $\theta_k$ satisfying $\theta_{j,k} \le 1/2$
	
	Thus far, we have shown that $\phi (X_j , S_{j,k} )$ is the UMP level $\beta_{j,k}$ test, conditional on $S_{j,k}$.
	In order to argue that it is UMPU level $\beta_{j,k} $ among all level $\beta_{j,k}$ unbiased tests  (unconditionally),
	consider the boundary of the parameter space $\omega_{j,k} = \{ ( \theta_1 , \ldots , \theta_p );~\theta_j = \theta_k \}$.
	The family of distributions of $(X_1 , \ldots , X_p )$ still is multinomial, but now  $T = (  S_{j,k}, ~X_i , i \ne j, i \ne k )$
	is complete and sufficient for $\omega_{j,k}$.  Hence, any test, say $\psi$, that is similar on $\omega_{j,k}$, i.e., it satisfies
	that the rejection probability is equal to $\beta_{j,k}$ for  all $\theta \in \omega_{j,k}$ or $E_{\theta} ( \psi ) = \beta_{j,k}$ for all $\theta \in \omega_{j,k}$, must satisfy
	that it is conditionally level $\beta_{j,k}$ given $T$ or $E_{\theta} ( \psi | T) = \beta_{j,k}$ for all $\theta \in \omega_{j,k}$.
	In other words, all similar tests have Neyman structure, by Theorem 4.3.2 in \cite{Lehmann:2022aa}.  Therefore,
	the optimal  unconditional level $\beta_{j,k}$ test must  be obtained by finding the optimal conditional level $\beta_{j,k}$,
	conditional on $T$.  Finally, note that specifying the  conditional distribution of the data $(X_1 , \ldots , X_p)$ given $T$
	is equivalent to specifying the conditional distribution of $X_j$ given $S_{j,k}$, since conditionally all other $X_i$ 
	with $i \ne j$ and $i \ne k$ are now fixed.  But, we have found the optimal conditional test above based on the conditional distribution of $X_j$ given $S_{j,k}$. 
	 (Note that we could have argued by writing the family of distributions in the canonical 
	multiparameter exponential form discussed in Section 4.4 of \cite{Lehmann:2022aa}, but the notation becomes
	messy, stemming from the fact that the rank of the multinomial family is $p-1$ and not $p$, and consequently the argument gets obscured.)
\end{proof}

\begin{proof}[Proof of Theorem~\ref{thm: bootstrap validity}]
	First, by an argument analogous to the one in Theorem~3.3 in \cite{Mogstad:2024aa},
	$$P\left\{r_{j} \in R_{n,j}^{\rm boot}\;\forall j\in J_0\right\} \geq P\left\{\Delta_I \in C_{{\rm lower},n}(1-\alpha,I)\right\}, $$
	where $I$ is equal to one of the sets $J^{\rm lower}$, $J^{\rm upper}$, or $J^{\rm two-sided}$, which was used to construct $R_n^{\rm boot}$. The desired result therefore follows from Lemma~\ref{lem: bootstrap validity for CS for differences}.
\end{proof}

\section{Supporting Results for the Empirical Application}

\begin{figure}[H]
    \centering
    \ContinuedFloat*
    \includegraphics[width=\textwidth]{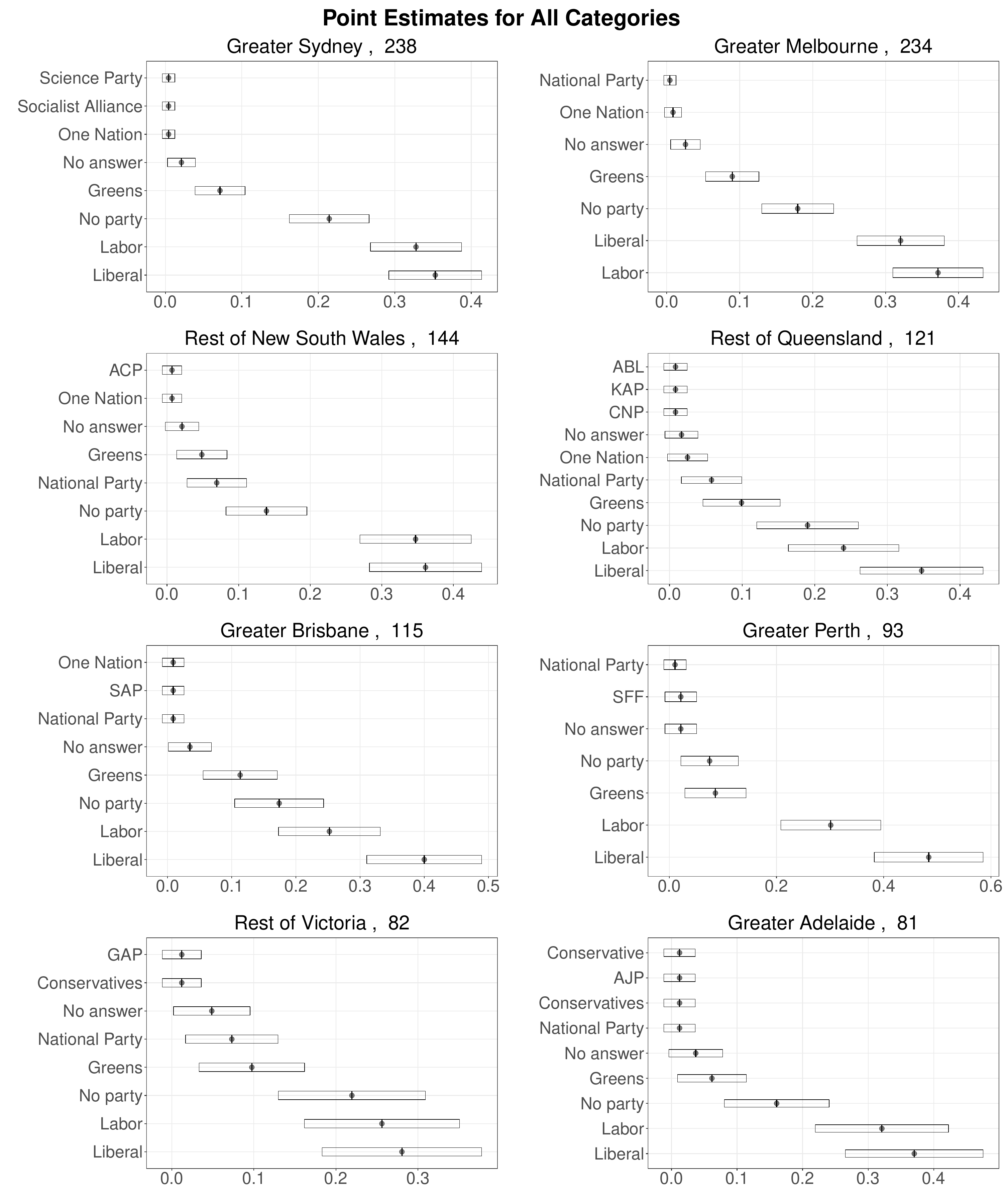}
    \caption{ Point estimates of categories support shares in fifteen Australian territories from AES2019 and $\pm 1.96$se. }
    \label{fig: Data1}
\end{figure}

\begin{figure}[H]
    \centering
    \ContinuedFloat
    \includegraphics[width=\textwidth]{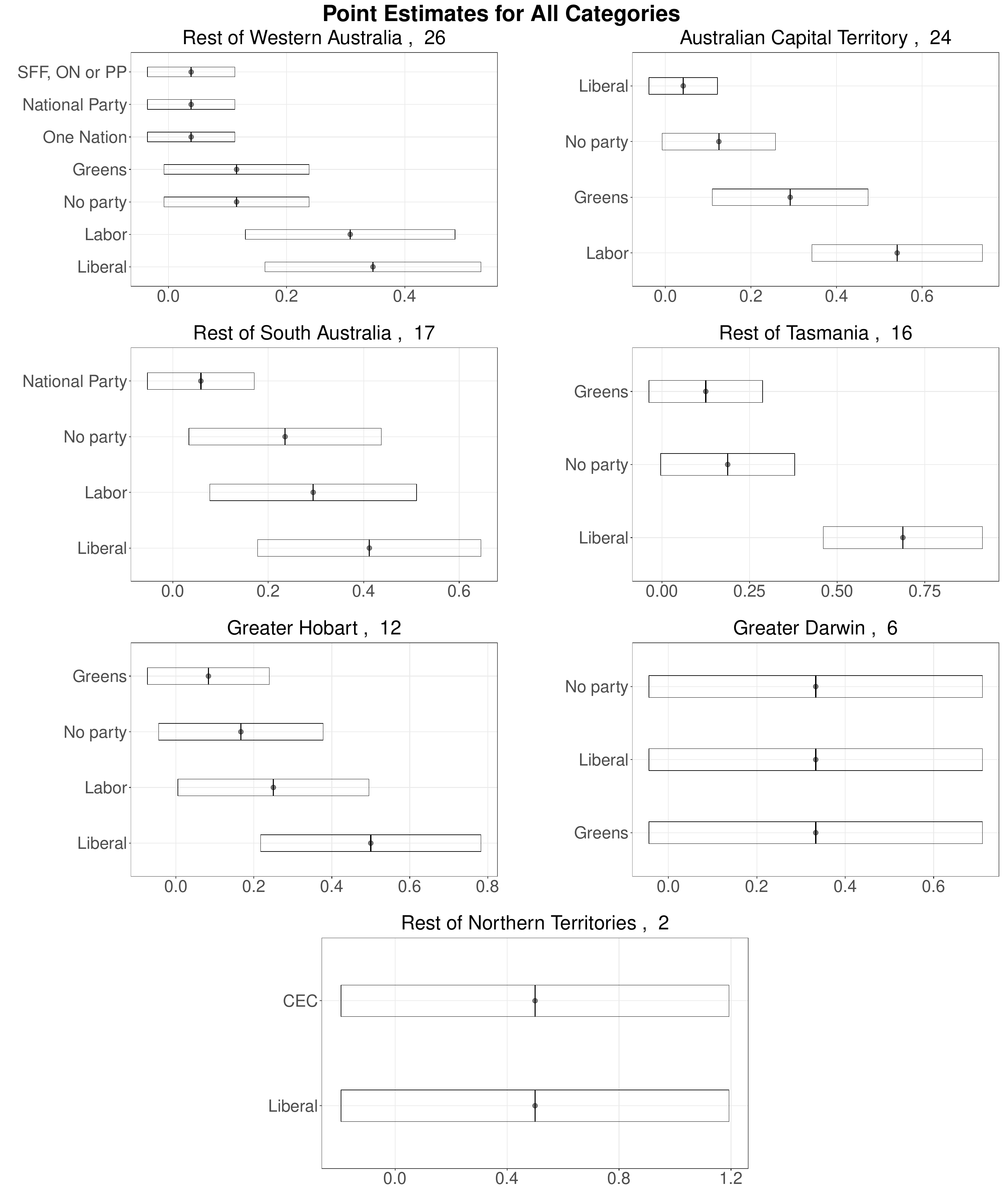}
    \caption{Point estimates of categories support shares in fifteen Australian territories from AES2019 and $\pm 1.96$se.}
    \label{fig: Data2}
\end{figure}

\begin{figure}[H]
    \centering
    \ContinuedFloat*
    \includegraphics[width= \textwidth]{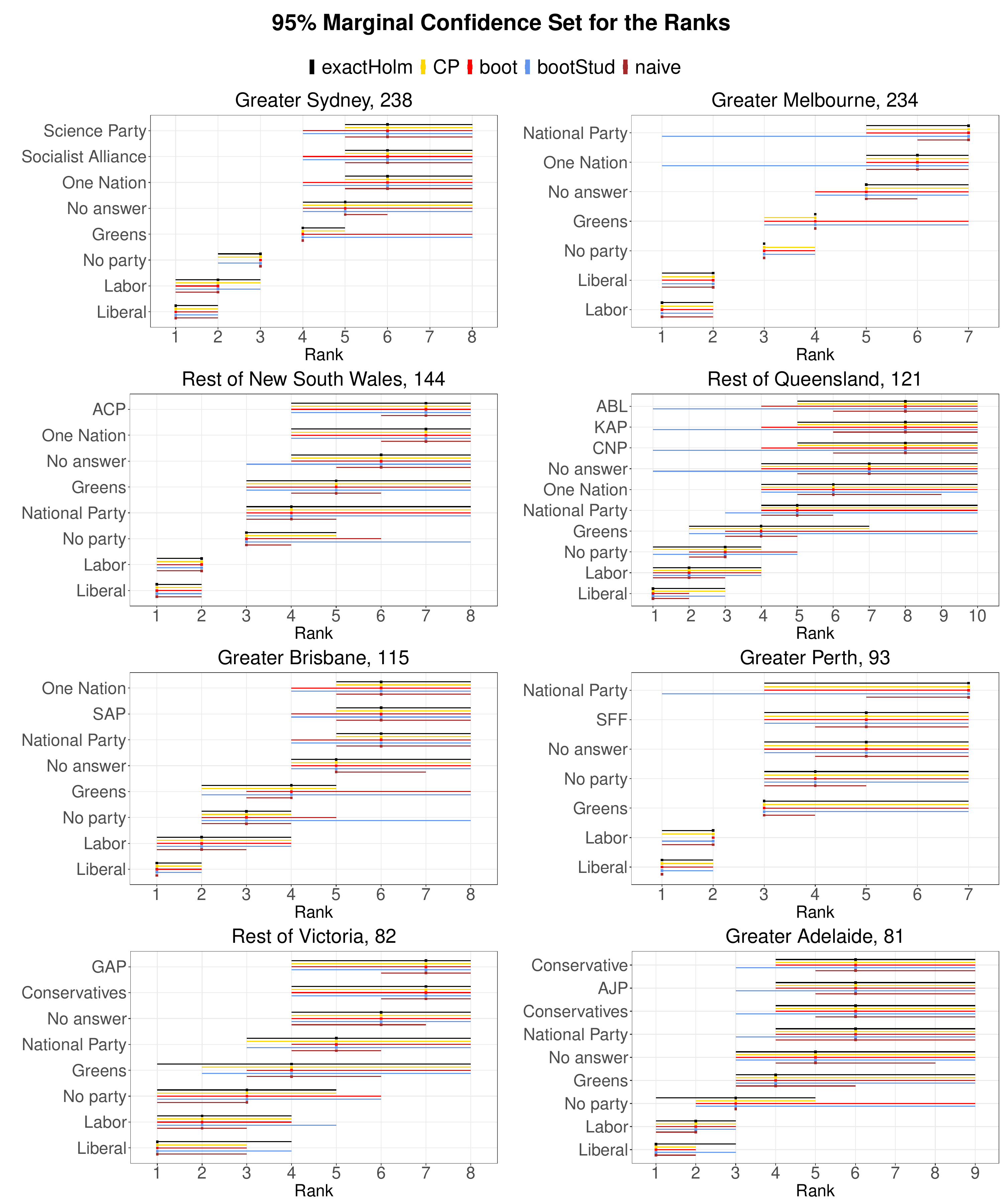}
    \caption{95\% marginal confidence sets for the ranks of categories in fifteen Australian territories ranked by their support share in AES2019. Each panel shows the confidence sets for the ranks computed by five methods for each party.}
    \label{fig: MargCS1}
\end{figure}

\begin{figure}[H]
    \centering
    \ContinuedFloat
    \includegraphics[width= \textwidth]{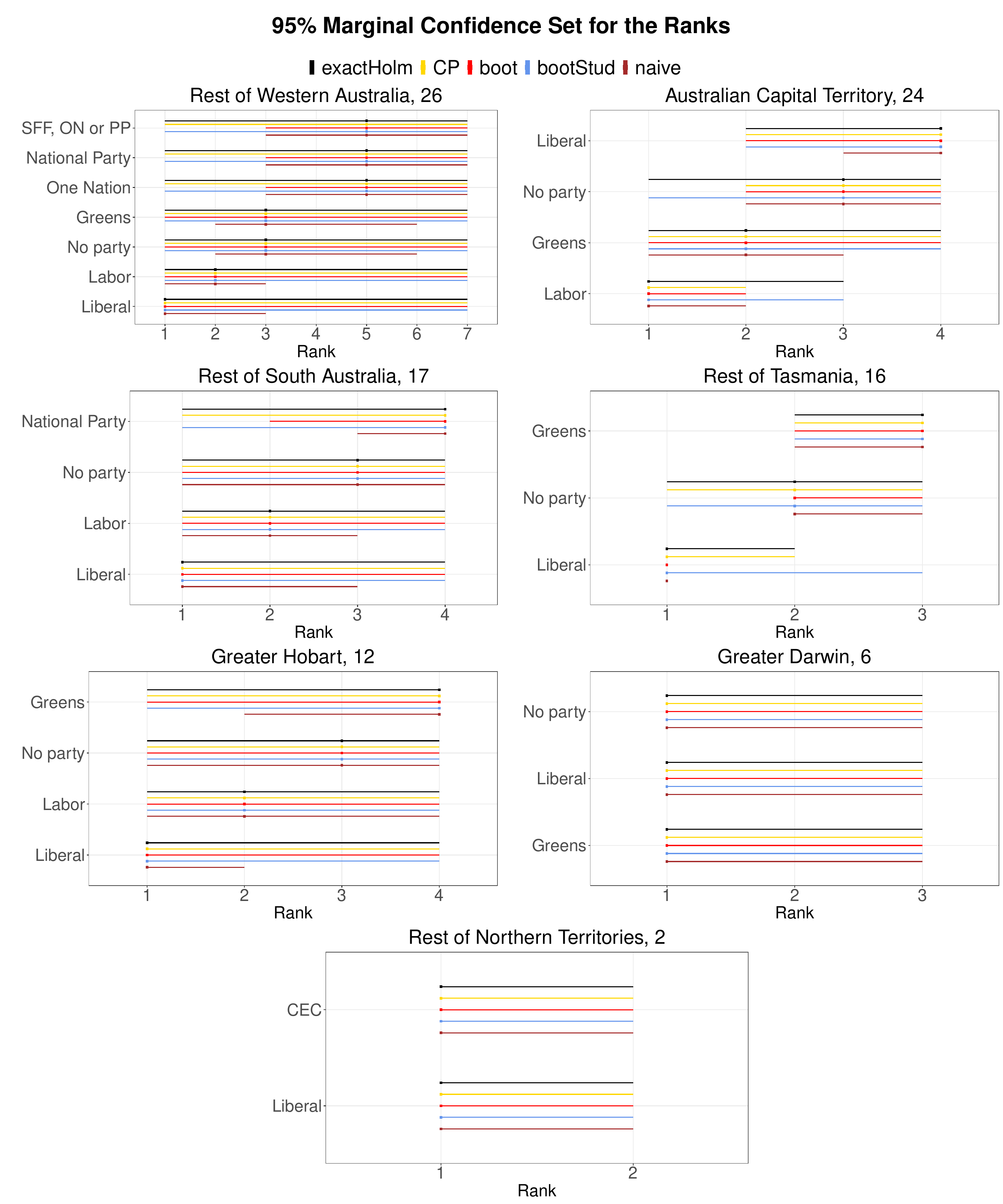}
    \caption{95\% marginal confidence sets for the ranks of categories in fifteen Australian territories ranked by their support share in AES2019. Each panel shows the confidence sets for the ranks computed by five methods for each party.}
    \label{fig: MargCS2}
\end{figure}

\begin{figure}[H]
    \centering
    \includegraphics[width=0.99\textwidth]{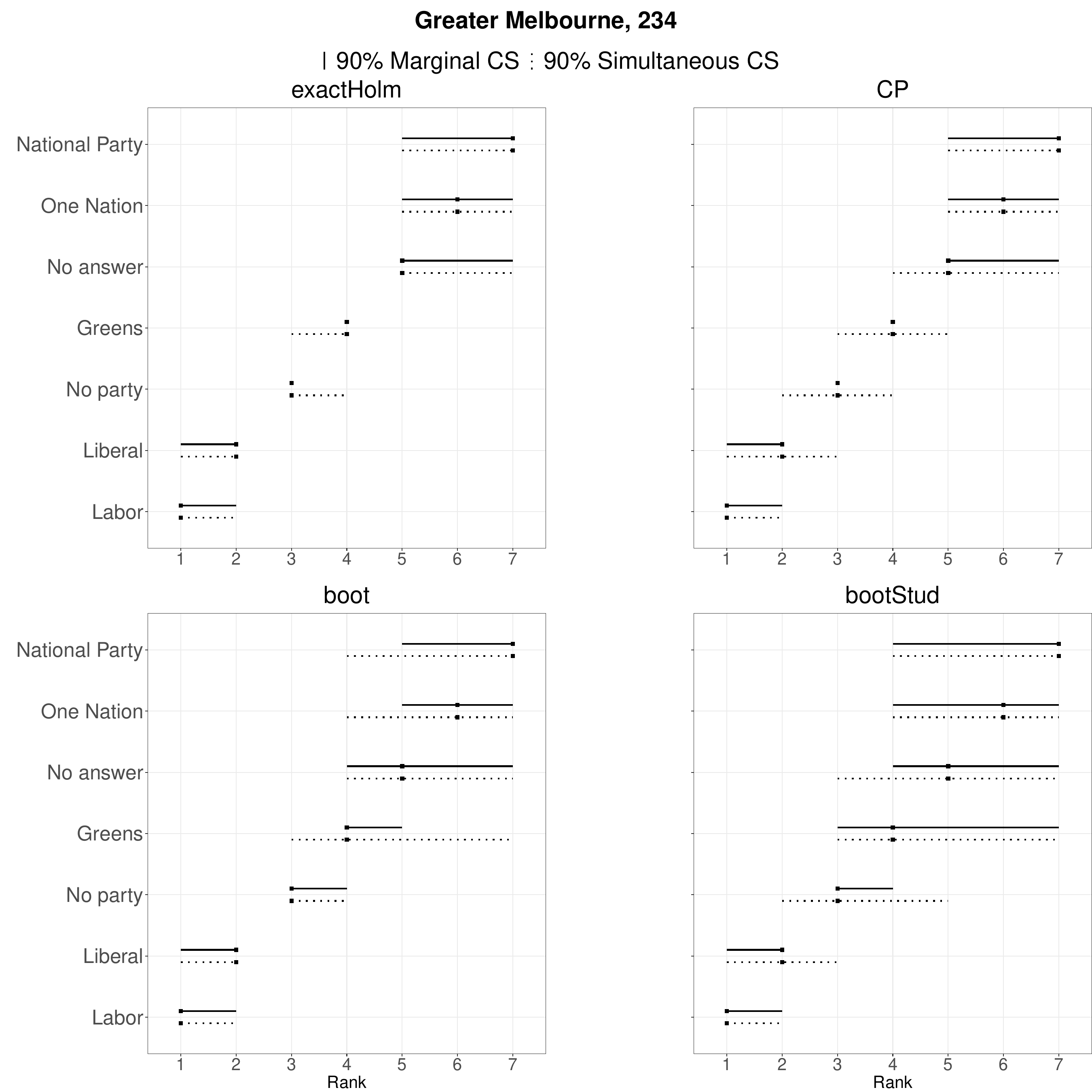}
    \caption{90\% marginal and 90\% simultaneous  confidence sets for the ranks of categories in Greater Melbourne. }
    \label{fig: MargSimul10}
\end{figure}

\clearpage
\newpage
\bibliographystyle{ims}
\bibliography{ref}

\end{document}